\documentclass[11pt,a4wide]{article}

\addtolength{\topmargin}{-1cm}
\addtolength{\textheight}{1.5cm}
\addtolength{\oddsidemargin}{-1.cm}
\addtolength{\evensidemargin}{-1.cm}
\addtolength{\textwidth}{2cm}

\usepackage{microtype,xspace,stmaryrd}

\usepackage{amsmath,amsfonts,amssymb,epsfig,color}
\usepackage{stmaryrd}
\usepackage{amssymb}
\usepackage{amsfonts}
\usepackage{amsmath}
\usepackage{hyperref}
\usepackage{amsthm}
\usepackage{amssymb,amstext,hyperref}
\usepackage{cite}

\newtheorem{theorem}{Theorem}
\newtheorem{lemma}{Lemma}
\newtheorem{corollary}{Corollary}

\newcommand{\opt}{{\sf opt}\xspace}

\renewcommand{\a}{a}

\newcommand{\tw}{{\sf tw}}



\usepackage{cite}
\usepackage{hyperref}
\usepackage{boxedminipage}
\usepackage[english]{babel}
\usepackage[utf8]{inputenc}


\newcommand{\intv}[1]{[#1 ]}
\newcommand{\np}{{\sf NP}\xspace}
\newcommand{\p}{{\sf P}\xspace}
\newcommand{\diam}{{\sf diam}\xspace}
\newcommand{\dist}{{\sf dist}\xspace}

\newcommand{\pb}{\textsc{Diameter-Tree}\xspace}

\newcommand{\fpt}{{\sf FPT}\xspace}
\newcommand{\xp}{{\sf XP}\xspace}

\usepackage{tikz}

\newcommand{\sm}{\setminus}

\newcommand{\es}{\emptyset}
\newcommand{\Bcal}{\mathcal{B}}
\newcommand{\Ccal}{\mathcal{C}}

\newcommand{\Qcal}{\mathcal{Q}}

\newcommand{\spanning}{\mathcal{S}}

\newcommand{\Tcal}{\mathcal{T}}
\newcommand{\Rcal}{\mathcal{R}}
\newcommand{\Ocal}{\mathcal{O}}
\newcommand{\Acal}{\mathcal{A}}

\newcommand{\cost}{\texttt{cost}}

\tikzstyle{place}=[
  circle,
  fill=black,
size=2pt,
]

\newcommand{\probl}[4]
{
  \begin{flushleft}
    \fbox{
      \begin{minipage}{#4cm}
        \noindent {\textsc {#1}}\\
        {\bf Input:} #2\\
        {\bf Output:} #3
      \end{minipage}
    }
  \end{flushleft}
}


\vfuzz2pt 
\hfuzz2pt 
\usepackage{amsmath}
\usepackage{amssymb}




\newcommand {\set}[1]{\left\{#1\right\}}



\newtheorem{claim}{Claim}

\begin{document}

\date{}
\title{Parameterized complexity of finding a spanning tree with minimum reload cost diameter\footnote{Emails of authors: \texttt{baste@lirmm.fr}, \texttt{didem.gozupek@gtu.edu.tr}, \texttt{paul@lirmm.fr}, \texttt{sau@lirmm.fr}, \texttt{cmshalom@telhai.ac.il}, \texttt{sedthilk@thilikos.info} . Work supported by the bilateral research program of CNRS and TUBITAK under grant no.114E731, PASTA project of Université de Montpellier, TUBITAK 2221 programme, and by project DEMOGRAPH (ANR-16-CE40-0028). }}

\author{
Julien Baste\thanks{Université de Montpellier, LIRMM, Montpellier,  France.} \and
Didem Gözüpek\thanks{Department of Computer Engineering, Gebze Technical University, Kocaeli, Turkey.}\and
Christophe Paul\thanks{AlGCo project-team, CNRS, LIRMM, France.}\and
Ignasi Sau$^{\mbox{\footnotesize \S}}$\thanks{Departamento de Matemática, Universidade Federal do Ceará, Fortaleza, Brazil.} \and
Mordechai Shalom\thanks{TelHai College, Upper Galilee, 12210, Israel.}~\thanks{Department of Industrial Engineering, Boğaziçi  University, Istanbul, Turkey.} \and
Dimitrios  M. Thilikos$^{\mbox{\footnotesize \S}}$\thanks{Department of Mathematics, National and Kapodistrian University of Athens, Athens, Greece.}}

\maketitle

\begin{abstract}
\noindent We study the  minimum diameter spanning tree problem under the reload cost model (\pb for short) introduced by Wirth and Steffan~(2001). In this problem, given an undirected edge-colored graph $G$, reload costs on a path arise at a node where the path uses consecutive edges of different colors. The objective is to find a spanning tree of $G$ of minimum diameter with respect to the reload costs. We initiate a systematic study of the parameterized complexity of the \pb problem by considering the following parameters: the cost of a solution, and the treewidth and the maximum degree $\Delta$ of the input graph. We prove that \pb is {\sf para}-\np-hard for any combination of two of these three parameters, and that it is \fpt parameterized by the three of them. We also prove that the problem can be solved in polynomial time on cactus graphs. This result is somehow surprising since we prove \pb to be \np-hard on graphs of treewidth two, which is best possible as the problem can be trivially solved on forests. When the reload costs satisfy the triangle inequality, Wirth and Steffan~(2001) proved that the problem can be solved in polynomial time on graphs with $\Delta = 3$, and Galbiati~(2008) proved that it is \np-hard if $\Delta = 4$. Our results show, in particular, that without the requirement of the triangle inequality, the problem is \np-hard if $\Delta = 3$, which is also best possible. Finally, in the case where the reload costs are polynomially bounded by the size of the input graph, we prove that \pb is in \xp and {\sf W}[1]-hard parameterized by the treewidth plus~$\Delta$.
\end{abstract}

\vspace{.1cm}

\noindent{\bf Keywords: }{reload cost problems; minimum diameter spanning tree; parameterized complexity; {\sf FPT} algorithm; treewidth; dynamic programming.}

\vspace{.4cm}

\section{Introduction}
\label{sec:intro}



Numerous network optimization problems can be modeled by edge-colored graphs. Wirth and Steffan introduced in \cite{WiSt01} the concept of \emph{reload cost}, which refers to the cost that arises in an edge-colored graph while traversing a vertex via two consecutive edges of different colors. The value of the reload cost depends on the colors of the traversed edges.
Although the reload cost concept has many important applications in telecommunication networks, transportation networks, and energy distribution networks, it has surprisingly received attention only recently.

In heterogeneous communication networks, routing requires switching among different technologies such as cables, fibers, and satellite links. Due to data conversion between incompatible subnetworks, this switching causes high costs, largely outweighing the cost of routing the packets within each subnetwork. The recently popular concept of vertical handover \cite{DAD09}, which allows a mobile user to have undisrupted connection during transitioning between different technologies such as 3G (third generation) and wireless local area network (WLAN), constitutes another application area of the reload cost concept. Even within the same technology, switching between different service providers incurs switching costs. Another paradigm that has received significant attention in the wireless networks research community is \emph{cognitive radio networks} (CRN), a.k.a. \emph{dynamic spectrum access networks}. Unlike traditional wireless technologies, CRNs operate across a wide frequency range in the spectrum and frequently requires frequency switching; therefore, the frequency switching cost is indispensable and of paramount importance. Many works in the CRNs literature focused on this frequency switching cost from an application point of view (for instance, see \cite{AAK+13, GBA13, BaAl13, BEAT13, EBT13, AgDe16, ShRa16}) by analyzing its various aspects such as delay and energy consumption. Operating in a wide range of frequencies is indeed a property of not only CRNs but also other 5G technologies. Hence, applications of the reload cost concept in communication networks continuously increment. In particular, the energy consumption aspect of this switching cost is especially important in the recently active research area of green networks, which aim to tackle the increasing energy consumption of information and communication technologies \cite{BCRR12, CeKa16}.

Reload cost concept finds applications also in other networks such as transportation networks and energy distribution networks. For instance, a cargo transportation network uses different means of transportation. The loading and unloading of cargo at junction points is costly and this cost may even outweigh the cost of carrying the cargo from one point to another \cite{Ga08}. In energy distribution networks, reload costs can model the energy losses that occur at the interfaces while transferring energy from one type of carrier to another \cite{Ga08}.

Recent works in the literature focused on numerous problems related to the reload cost concept: the minimum reload cost cycle cover problem \cite{GGM14}, the problems of finding a path, trail or walk with minimum total reload cost between two given vertices \cite{GLMM10}, the problem of finding a spanning tree that minimizes the sum of reload costs of all paths between all pairs of vertices \cite{GGR12}, various path, tour, and flow problems related to reload costs \cite{AGM11}, the minimum changeover cost arborescence problem \cite{GGM11, GozupekSVZ14, FPT-by-tw-Delta, GOP+16}, and problems related to finding a proper edge coloring of the graph so that the total reload cost is minimized \cite{GoSh16}.

The work in \cite{WiSt01}, which introduced the concept of reload cost, focused on the following problem, called {\sc Minimum Reload Cost Diameter Spanning Tree} (\pb for short), and which is the one we study in this paper: given an undirected graph $G = (V,E)$ with an edge-coloring $\chi: E(G)\rightarrow X$ and a reload cost function $c:X^2 \rightarrow  \mathbb{N}_{0}$, find a spanning tree of $G$ with minimum diameter with respect to the reload costs (see Section~\ref{sec:prelim} for the formal definitions).

This problem has important applications in communication networks, since forming a spanning tree is crucial for broadcasting control traffic such as route update messages. For instance, in a multi-hop cognitive radio network where a frequency is assigned to each wireless link depending on availabilities of spectrum bands, delay-aware broadcasting of control traffic necessitates the forming of a spanning tree by taking the delay arising from frequency switching at every node into account. Cognitive nodes send various control information messages to each other over this spanning tree. A spanning tree with minimum reload cost diameter in this setting corresponds to a spanning tree in which the maximum frequency switching delay between any two nodes on the tree is minimized. Since control information is crucial and needs to be sent to all other nodes in a timely manner, ensuring that the maximum delay is minimum is vital in a cognitive radio network.

Wirth and Steffan~\cite{WiSt01} proved that \pb is inapproximable within a factor better than $3$ (in particular, it is {\sf NP}-hard), even on graphs with maximum degree $5$. They also provided a polynomial-time exact algorithm for the special case where the maximum degree is $3$ and the reload costs satisfy the triangle inequality. Galbiati~\cite{Ga08} showed stronger hardness results for this problem, by proving that even on graphs with maximum degree $4$, the problem cannot be approximated within a factor better than $2$ if the reload costs do not satisfy the triangle inequality, and cannot be approximated within any factor better than $5/3$ if the reload costs satisfy the triangle inequality. The complexity of \pb  (in the general case) on graphs with maximum degree $3$ was left open.

\vspace{.25cm}
\noindent \textbf{Our results}. In this article we initiate a systematic study of the complexity of the \pb problem, with special emphasis on its parameterized complexity for several choices of the parameters. Namely, we consider any combinations of the parameters $k$ (the cost of a solution), $\tw$ (the treewidth of the input graph), and $\Delta$ (the maximum degree of the input graph). We would like to note that these parameters have practical importance in communication networks. Indeed, besides the natural parameter $k$, whose relevance is clear, many networks that model real-life situations appear to have small treewidth~\cite{Jen01,SpLa88}. On the other hand,  the degree of a node in a network is related to its number of transceivers, which are costly devices in many different types of networks such as optical networks \cite{KoCh01}. For this reason, in practice the maximum degree of a network usually takes small values.

Before elaborating on our results, a summary of them can be found in Table~\ref{tab:results}.

\begin{table}[htb]
\begin{center}
\scalebox{.913}{\begin{tabular}{|c||c|c|c|c||c|}
\hline
Problem  & \multicolumn{4}{c||}{Parameterized complexity with parameter} & Polynomial \\

\cline{2-5}
  & $k + \tw$ & $k + \Delta$ &  $\tw + \Delta$ & $k + \tw + \Delta$ & cases \\
 \hline  \hline

    &    {\sf NP}h for     &  {\sf NP}h for   &   {\sf NP}h for   &   {\sf FPT}   &       in \p on          \\

 \pb        &   $k=9, \tw=2$     &   $k=0, \Delta = 3$  &  $\tw=3, \Delta=3$  &     (Thm~\ref{thm:FPT-algo})    &  cacti       \\
           &   (Thm~\ref{thm:hard-bounded-tw})     &  (Thm~\ref{thm:hard-bounded-degree}) & (Thm~\ref{thm:hard-tw-delta})    &   &        (Thm~\ref{thm:cactus})         \\

 \hline
\pb                    &       &     &   \xp  (Thm~\ref{thm:FPT-algo})&       &                    \\
     with poly costs    &    $\checkmark$   &    $\checkmark$  &   {\sf W}[1]-hard    &  $\checkmark$   &     $\checkmark$       \\
                      &       &    &   (Thm~\ref{thm:W-hard-poly-costs})  &      &                     \\
\hline
\end{tabular}
}
\end{center}\vspace{-.25cm}
\caption{\label{tab:results} Summary of our results, where $k, \tw, \Delta$ denote the cost of the solution,  the treewidth, and the maximum degree of the input graph, respectively. {\sf NP}h   stands for {\sf NP}-hard. The symbol `$\checkmark$' denotes that the result above still holds for polynomial costs.\vspace{-.15cm}}
\end{table}

We first prove, by a reduction from \textsc{3-Sat}, that \pb  is \np-hard on outerplanar graphs (which have treewidth at most 2) with only one vertex of degree greater than 3, even with three different costs that satisfy the triangle inequality, and $k=9$. Note that, in the case where the costs satisfy the triangle inequality, having only one vertex of degree greater than 3 is best possible, as if all vertices have degree at most 3, the problem can be solved in polynomial time~\cite{WiSt01}. Note also that the bound on the treewidth is best possible as well, since the problem is trivially solvable on graphs of treewidth 1, i.e., on forests.

Toward investigating the border of tractability of the problem with respect to treewidth, we exhibit a polynomial-time algorithm on a relevant subclass of the graphs of treewidth at most 2: cactus graphs. This algorithm is quite involved and, in a nutshell, processes in a bottom-up manner the \emph{block tree} of the given cactus graph, and  uses at each step of the processing an algorithm that solves \textsc{2-Sat} as a subroutine.

Back to hardness results, we also prove,  by a reduction from a restricted version of \textsc{3-Sat}, that \pb  is $\np$-hard on graphs with $\Delta \leq 3$, even with only two different costs, $k=0$, and bounded number of colors. In particular, this settles the complexity of the problem on graphs with $\Delta \leq 3$ in the general case where the triangle inequality is not necessarily satisfied, which had been left open in previous work~\cite{WiSt01,Ga08}. Note that $\Delta \leq 3$ is best possible, as \pb can be easily solved on graphs with $\Delta \leq 2$.

As our last $\np$-hardness reduction, we prove, by a reduction from \textsc{Partition}, that
the \pb problem is $\np$-hard on planar graphs with $\tw \leq 3$ and $\Delta \leq 3$.

The above hardness results imply that  the \pb problem is {\sf para}-\np-hard for any combination of two of the three parameters $k$, $\tw$, and $\Delta$. On the positive side, we show that \pb is \fpt parameterized by the three of them, by using a (highly nontrivial) dynamic programming algorithm on a tree decomposition of the input graph.

 Since our {\sf para}-$\np$-hardness reduction with parameter $\tw + \Delta$ is from \textsc{Partition}, which is a typical example of {\sl weakly} $\np$-complete problem~\cite{GJ79}, a natural question is whether \pb, with parameter $\tw + \Delta$,  is {\sf para}-\np-hard, \xp, {\sf W}[1]-hard, or \fpt  when the reload costs are {\sl polynomially bounded} by the size of the input graph. We manage to answer this question completely: we show that in this case the problem is in \xp (hence {\sl not} {\sf para}-$\np$-hard) and {\sf W}[1]-hard parameterized by $\tw + \Delta$. The {\sf W}[1]-hardness reduction is from the
\textsc{Unary Bin Packing} problem parameterized by the number of bins, proved to be ${\sf W}[1]$-hard by Jansen \emph{et al}.~\cite{JansenKMS13}.

Altogether, our results provide an accurate picture of the (parameterized) complexity of the \pb problem.

%
%
%
%
%

\vspace{.25cm}
\noindent \textbf{Organization of the paper}. We start in Section~\ref{sec:prelim} with some brief preliminaries about graphs, the \pb problem, parameterized complexity, and tree decompositions. In Section~\ref{sec:NPhardness} we provide the {\sf para}-\np-hardness results. In Section~\ref{sec:cactus} we present  the polynomial-time algorithm on cactus graphs, and in Section~\ref{sec:FTP-algo} we present the  \fpt algorithm on general graphs parameterized by $k + \tw + \Delta$. In Section~\ref{sec:poly-costs} we focus on the case where the reload costs are polynomially bounded. Finally, we conclude the article in Section~\ref{sec:conclusion}.

\section{Preliminaries}
\label{sec:prelim}
\noindent \textbf{Graphs and sets}. We use standard graph-theoretic notation, and we refer the reader to~\cite{Die05} for any undefined term. Given a graph $G$ and a set $S \subseteq V(G)$, we define $\textsf{adj}_G(S)$ to be the set of edges of $G$ that intersect $S$.
We also define $N_G[S] = S \cup \{x \mid \exists y \in S : \{x,y\} \in E(G)\}$. For a graph $G$ and an edge $e\in E(G)$, we let $G - e = (V(G), E(G) \setminus e)$. Given a graph $G$ and a set $S\subseteq V(G)$, we say that $S$  is {\em good for}  $G$ if
each connected component of $G$ contains at least one vertex of $S$. Given two integers $i$ and $j$ with $i \leq j$, we denote by $\intv{i,j}$  the set of all integers $k$ such that $i \leq k \leq j$. For an integer $i \geq 1$, we denote by $[i]$  the set of all integers $k$ such that $ 1 \leq k \leq i$.

\vspace{.25cm}
\noindent \textbf{Reload costs and definition of the problem.} For reload costs, we follow the notation and terminology defined by \cite{WiSt01}. We consider edge-colored graphs $G=(V,E)$, where the colors are taken from a finite set $X$ and the coloring function is $\chi: E(G)\rightarrow X$. The reload costs are given by a nonnegative function $c:X^2 \rightarrow  \mathbb{N}_{0}$, which we assume for simplicity to be symmetric. The cost of traversing two incident edges $e_1$, $e_2$ is $c(e_1, e_2):=c(\chi(e_1), \chi(e_2))$. By definition, reload costs at the endpoints of a path equal zero. Consequently, the reload cost of a path with one edge also equals zero. The \emph{reload cost} of a path $P$ of length $\ell \geq 2$ with edges $e_1,e_2,\ldots,e_{\ell}$ is defined as $c(P):=\sum_{i=2}^{\ell}c(e_{i-1}, e_{i})$. The induced reload cost distance function is given by $\dist_{G}^{c}(u,v)=\min\{c(P) \mid \text{$P$ is a path from $u$ to $v$ in $G$}\}$. The \emph{diameter} of a tree $T$ is $\diam(T):=\max_{u,v\in V} \dist_{T}^{c}(u,v)$, where for notational convenience we assume that the edge-coloring function $\chi$ and the reload cost function $c$ are clear from the context.

The problem we study in this paper can be formally defined as follows:

\probl
{\sc Minimum Reload Cost Diameter Spanning Tree (\pb)}
{A graph $G = (V,E)$ with an edge-coloring $\chi$ and a reload cost function $c$.}
{A spanning tree $T$ of $G$ minimizing $\diam(T)$.}{14.1}

If for every three distinct edges $e_1,e_2,e_3$ of $G$ incident to the same node, it holds that $c(e_1,e_3) \leq c(e_1,e_2) + c(e_2,e_3)$, we say that the reload cost function $c$ satisfies the \emph{triangle inequality}. This assumption is sometimes used in practical applications~\cite{WiSt01}.

%


\vspace{.25cm}
\noindent \textbf{Parameterized complexity.} We refer the reader to~\cite{DF13,CyganFKLMPPS15} for basic background on parameterized complexity, and we recall here only some basic definitions.
A \emph{parameterized problem} is a language $L \subseteq \Sigma^* \times \mathbb{N}$.  For an instance $I=(x,k) \in \Sigma^* \times \mathbb{N}$, $k$ is called the \emph{parameter}. 

A parameterized problem is \emph{fixed-parameter tractable} ({\sf FPT}) if there exists an algorithm $\Acal$, a computable function $f$, and a constant $c$ such that given an instance $I=(x,k)$,
$\Acal$ (called an {\sf FPT} \emph{algorithm}) correctly decides whether $I \in L$ in time bounded by $f(k) \cdot |I|^c$. For instance, the \textsc{Vertex Cover} problem parameterized by the size of the solution is \fpt.

A parameterized problem is in \xp if there exists an algorithm $\Acal$ and two computable functions $f$ and $g$ such that given an instance $I=(x,k)$,
$\Acal$ (called an \xp \emph{algorithm}) correctly decides whether $I \in L$ in time bounded by $f(k) \cdot |I|^{g(k)}$. For instance,  the \textsc{Clique} problem parameterized by the size of the solution is in \xp.

A parameterized problem with instances of the form $I=(x,k)$ is \emph{{\sf para}-\np-hard} if it is \np-hard for some fixed {\sl constant} value of the parameter $k$. For instance, the \textsc{Vertex Coloring} problem parameterized by the number of colors is {\sf para}-\np-hard. Note that, unless $\p = \np$, a {\sf para}-\np-hard problem cannot be in \xp, hence it cannot be \fpt either.

Within parameterized problems, the class {\sf W}[1] may be seen as the parameterized equivalent to the class \np of classical optimization problems. Without entering into details (see~\cite{DF13,CyganFKLMPPS15} for the formal definitions), a parameterized problem being {\sf W}[1]-\emph{hard} can be seen as a strong evidence that this problem is {\sl not} \fpt. For instance,  the \textsc{Clique} problem parameterized by the size of the solution is the canonical example of a {\sf W}[1]-hard problem. To transfer {\sf W}[1]-hardness from one problem to another, one uses a \emph{parameterized reduction}, which given an input $I=(x,k)$ of the source problem, computes in time $f(k) \cdot |I|^c$, for some computable function $f$ and a function $c$, an equivalent instance $I'=(x',k')$ of the target problem, such that $k'$ is bounded by a function depending only on $k$.

\vspace{.25cm}
\noindent \textbf{Tree decompositions.} A \emph{tree decomposition} of a graph $G$ is a pair ${\cal D}=(Y,{\cal X})$, where $Y$ is a tree
and ${\cal X}=\{X_{t}\mid t\in V(Y)\}$ is a collection of subsets of $V(G)$
 such that:
\begin{itemize}
\item $\bigcup_{t \in V(Y)} X_t = V(G)$,
\item for every edge $\{u,v\} \in E$, there is a $t \in V(Y)$ such that $\{u, v\} \subseteq X_t$, and
\item for each $\{x,y,z\} \subseteq V(Y)$ such that $z$ lies on the unique path between $x$ and $y$ in $Y$,  $X_x \cap X_y \subseteq X_z$.
\end{itemize}
We call the vertices of $Y$ {\em nodes} of ${\cal D}$ and the sets in ${\cal X}$ {\em bags} of ${\cal D}$. The width of the  tree decomposition ${\cal D}=(Y,{\cal X})$ is $\max_{t \in V(Y)} |X_t| - 1$.
The \emph{treewidth} of $G$, denoted by $\tw(G)$, is the smallest integer $w$ such that there exists a tree decomposition of $G$ of width at most $w$.

\vspace{.25cm}
\noindent \textbf{Nice tree decompositions.} Let ${\cal D}=(Y,{\cal X})$
be a tree decomposition of $G$, $r$ be a vertex of $Y$, and   ${\cal G}=\{G_{t}\mid t\in V(Y)\}$ be
a collection of subgraphs of   $G$, indexed by the vertices of $Y$.
We say that the triple $({\cal D},r,{\cal G})$ is
\emph{nice} if the following conditions hold:
\begin{itemize}

\item $X_{r} = \es$ and $G_{r}=G$,
\item each node of ${\cal D}$ has at most two children in $Y$,
\item for each leaf $t \in V(Y)$, $X_t = \es$ and $G_{t}=(\emptyset,\emptyset).$ Such a $t$ is called a {\em leaf node},
\item if $t \in V(\Tcal)$ has exactly one child $t'$, then either
\begin{itemize}
\item $X_t = X_{t'}\cup \{v_{\rm insert}\}$ for some $v_{\rm insert} \not \in X_{t'}$ and $G_{t}=(V(G_{t'})\cup\{v_{\rm insert}\},E(G_{t'}))$.
  The node $t$ is called \emph{vertex-introduce  node}   and the vertex $v_{\rm insert}$ is the {\em insertion vertex} of $X_{t}$,
\item $X_t = X_{t'}$ and $G_{t}=(G_{t'},E(G_{t'})\cup\{e_{\rm insert}\})$ where $e_{\rm insert}$ is an edge of $G$ with endpoints in  $X_{t}$.
  The node $t$ is called \emph{edge-introduce  node}  and the edge $e_{\rm insert}$ is the {\em insertion edge} of $X_{t}$, or
\item $X_t = X_{t'} \sm \{v_{\rm forget}\}$ for some $v_{\rm forget} \in X_{t'}$ and $G_{t}=G_{t'}$.
  The node $t$ is called   \emph{forget node} and $v_{\rm forget}$ is the {\em forget vertex} of $X_{t}$.
\end{itemize}
\item if $t \in V(Y)$ has exactly two children $t'$ and $t''$, then $X_{t} = X_{t'} = X_{t''}$, and $E(G_{t'})\cap E(G_{t''})=\emptyset$. The node $t$ is called a \emph{join node}.
\end{itemize}
 The notion of a nice triple defined above is essentially the same
as the one of nice tree decomposition in~\cite{CyganNPPRW11} (which in turn is an enhancement of the original one, introduced in~\cite{Klo94}).
As already argued in~\cite{CyganNPPRW11,Klo94}, it is possible, given a tree decomposition to transform it in polynomial time
to a new one ${\cal D}$ of the same width and construct a collection ${\cal G}$ such that the triple $({\cal D},r,{\cal G})$ is nice.

\vspace{.25cm}
\noindent \textbf{Transfer triples and their fusion.} Let $(F,R,\alpha)$ be a triple where $F$ is a forest,  $R\subseteq V(F)$, and
$\alpha: R\times R^{F}\to\intv{0,k}\cup\{\bot\}$, where $R^{F}=V(F)\cup (E(F)\setminus{\sf adj}_{F}(R))$.
Keep in mind that $R^{F}$ contains all vertices and edges of $F$ except from the edges that are incident to vertices in $R$.
We call $(F,R,\alpha)$
a {\em transfer triple}  if, given a $(v,a)\in R\times  R^{F}$, $\alpha(v,a)=\bot$ if and only if $v$ and $a$ belong in different connected components of $F$. The function $\alpha$ will be used for indicating for each pair $(v,a)$  the ``cost of transfering'' from $v$ to $a$ in $F$ ($\alpha$ is not necessarily a distance function).

Let $(F_{1},R_{1},\alpha_{1})$ and  $(F_{2},R_{2},\alpha_{2})$ be two transfer triples where $R=R_{1}=R_{2}$, $E(F_{1})\cap E(F_{2})=\emptyset$,
and such that  $F=F_{1}\cup F_{2}$ is a forest.
Let also $\beta: {\sf adj}_{F_{1}}(R)\times{\sf adj}_{F_{2}}(R)\to\intv{0,k}\cup\{\bot\}$.
We require a function $\alpha_{1}{\oplus}_{\beta}\alpha_{2}: R\times R^{F}\to\intv{0,k}\cup\{\bot\}$
that builds the transferring costs of moving in $F$ by taking into account the corresponding transferring costs in $F_{1}$ and $F_{2}$.
The values of $\alpha_{1}{\oplus}_{\beta}\alpha_{2}$ are defined as follows:

Let $(v,a)\in R\times  R^{F}$.
Let $P$ be the  shortest path in $F$ containing $v$ and $a$ and let $V(P)=\{v_{0},\ldots,v_{r}\}$,
ordered in the way these vertices appear in $P$ and assuming that $v_0=v$.
To simplify notation, we assume that $\{v_{0},v_{1}\}$ is an edge of $F_{1}$
(otherwise, exchange the roles of $F_{1}$ and $F_{2}$). Given  $i\in\intv{r-1}$,
we define $e^{-}_{i}$ (resp. $e^{+}_{i}$) as the edge incident to $v_{i}$ that
appears before (resp. after) $v_{i}$ when traversing $P$ from $v$ to $a$.
We define the set of indices $$I=\{i\mid \mbox{$e^-_{i}$ and $e^+_{i}$ belong to different sets of $\{E(F_{1}),E(F_{2})\}$}\}.$$
Let $I=\{i_{1},\ldots,i_{q}\}$, where numbers are ordered in increasing order and we also set $i_0=0$. Then we set
\begin{eqnarray*}
\alpha_{1}\oplus_{\beta}\alpha_{2}(v,a)
& = & \sum_{h\in\intv{0,\lfloor \frac{q-1}{2}\rfloor}}\alpha_{1}(v_{2i_{h}},v_{2i_{h}+1}) +\sum_{h\in\intv{0,\lfloor \frac{q-2}{2}\rfloor}}\alpha_{2}(v_{2i_{h}+1},v_{2i_{h}+2})\\
& & +\sum_{h\in\intv{q}}\beta(e_{i_{h}}^-,e_{i_{h}}^+) +\alpha_{(q\!\!\!\!\mod 2)+1}(v_{i_{q}},a).
\end{eqnarray*}

\medskip

Throughout the paper, we let $n$, $\Delta$, and $\tw$ denote the number of vertices, the maximum degree, and the treewidth of the input graph, respectively. When we consider the (parameterized) decision version of the \pb problem, we let also $k$ denote the desired cost of a solution.

\section{{\sf Para}-NP-hardness results}\label{sec:NPhardness}

We start with the {\sf para}-$\np$-hardness result with parameter $k + \tw$.

\begin{theorem}\label{thm:hard-bounded-tw}
The \pb problem is $\np$-hard on outerplanar graphs with only one vertex of degree greater than 3, even with three different costs that satisfy the triangle inequality, and $k=9$. Since outerplanar graphs have treewidth at most 2, in particular, \pb is {\sf para}-$\np$-hard parameterized by $\tw$ and $k$.
\end{theorem}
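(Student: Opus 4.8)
The plan is to reduce from \textsc{3-Sat}. Let $\varphi$ be a 3-CNF formula with variables $x_1,\dots,x_n$ and clauses $C_1,\dots,C_m$. I want to build an edge-colored graph $G$, outerplanar, with a single high-degree vertex $\rho$ (a "hub"), together with a reload cost function $c$ using three colors satisfying the triangle inequality, so that $\varphi$ is satisfiable iff $G$ has a spanning tree of reload-cost diameter at most $k=9$. The hub $\rho$ will be the center of the tree: every variable gadget and every clause gadget will be attached to $\rho$ by short colored paths, so that the distance between any two vertices in the spanning tree is realized by a path through $\rho$. The idea is to design a variable gadget that offers $\rho$ exactly two ways to ``connect through it'' — one encoding \emph{true}, one encoding \emph{false} — where the two options differ in the color of the edge incident to $\rho$; since $\rho$ is a cut vertex in each gadget (outerplanarity forces the gadgets to hang off $\rho$ in a tree-like fashion, with cycles only inside a gadget), the spanning tree is forced to pick, per variable, one of the two colors at $\rho$, which we read as the truth assignment.

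The key steps, in order: (1) fix the three colors, say $\{1,2,3\}$, and a cost table $c$ that is symmetric, satisfies the triangle inequality, and has the feature that certain monochromatic-ish transitions cost $0$ while ``mismatched'' transitions cost enough to push the diameter over $9$; I will tune the constants so that a correctly-colored path from a clause gadget through $\rho$ to a satisfying literal costs $\le 9$, while any path that has to pass ``the wrong way'' through an unsatisfied clause gadget or through an inconsistently-set variable gadget costs $\ge 10$. (2) Build the variable gadget $H_i$: a small cycle through $\rho$ (keeping outerplanarity and $\deg_{H_i}(\rho)=2$ so that the total degree of $\rho$ is $2n + O(m)$, the only vertex of degree $>3$, all others having degree $\le 3$) whose unique spanning trees correspond to the two phases of $x_i$, distinguished by the color of the $\rho$-edge that survives. (3) Build the clause gadget $D_j$: attach to $\rho$ a gadget that ``tests'' the three literals of $C_j$, such that the eccentricity contribution of $D_j$ in the final tree is $\le 9$ iff at least one of its literals is set true by the choice made in the corresponding variable gadget, and is $\ge 10$ otherwise. (4) Argue the equivalence: a satisfying assignment yields a spanning tree where every pair of vertices is within distance $9$ via $\rho$; conversely a spanning tree of diameter $\le 9$ forces a consistent phase choice at each variable gadget (else two vertices inside one gadget are too far) and forces every clause gadget to ``see'' a true literal. (5) Check outerplanarity (each gadget is outerplanar and they share only the single vertex $\rho$, so the union is outerplanar), the degree condition, the three-color and triangle-inequality conditions, and that $k=9$ is a constant; then invoke that outerplanar graphs have treewidth $\le 2$ to conclude \textsf{para}-\np-hardness for $\tw$ and $k$ simultaneously.

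The main obstacle I expect is the simultaneous calibration in step (1): I need the cost function to be \emph{symmetric}, to \emph{satisfy the triangle inequality on the three colors}, to use \emph{only three colors}, and yet to create a sharp threshold at exactly $9$ that distinguishes ``good'' routings (consistent phase, satisfied clause) from ``bad'' ones. With only three colors and the triangle inequality severely constraining the cost matrix (essentially it must behave like a metric on three points), there is little freedom, so the gadgets must do the combinatorial work by forcing specific edge colors along forced subpaths rather than relying on a rich cost table; getting the variable gadget to genuinely have only the two intended spanning-tree ``shapes'' through $\rho$, while staying outerplanar and degree-$3$ away from $\rho$, is the delicate part. A secondary technical point is making sure that the diameter is always \emph{realized} by a path through $\rho$ — this needs the ``radius from $\rho$'' of each gadget to be small (a constant like $4$ or so) so that any two leaves in different gadgets are at distance roughly twice that plus the two reload costs at $\rho$, landing at or below $9$ exactly when the phase/clause conditions hold.
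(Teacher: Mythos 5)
Your plan diverges from the paper's construction in two material ways, and the first one is a genuine gap that makes the plan unlikely to work as stated.

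\textbf{Misreading of ``three different costs.''} You interpret the theorem's constraint as ``the coloring $\chi$ uses only three colors,'' and you correctly observe that this forces $c$ to be essentially a fixed metric on a three-point set, leaving ``little freedom.'' But the paper's reduction does \emph{not} use three colors. It assigns a \emph{distinct} color to every edge of $G$ and then defines $c$ freely on pairs of edges, taking only the three values $1$, $5$, and $10$; the phrase ``three different costs'' refers to the range of $c$, not to $|X|$. This is a crucial degree of freedom: with a distinct color per edge, the cost of a transition at the hub can depend on the \emph{identity} of the pair of incident edges (in particular on which literal each edge encodes), which is exactly what makes the consistency check possible. Under your interpretation, the cost at the hub can only depend on which of three color classes the two edges fall into, which is far too little information to distinguish among $n$ variables; there is no way to make ``literal $\ell$ vs.\ its negation $\overline{\ell}$'' incur a different penalty than ``two unrelated literals'' when the color palette has size three. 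You flagged this tension as your ``main obstacle'' but did not resolve it; under the three-color reading it is not an obstacle but a dead end.

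\textbf{Unnecessary variable gadgets.} Even setting the color issue aside, the paper's construction is strictly simpler than yours: there are \emph{no} variable gadgets at all. Each clause $c_j = (\ell_1\vee\ell_2\vee\ell_3)$ contributes a tiny gadget consisting of the hub $r$, the three vertices $v^j_{\ell_1},v^j_{\ell_2},v^j_{\ell_3}$, the three spoke edges $\{r,v^j_{\ell_i}\}$, and two chord edges $\{v^j_{\ell_1},v^j_{\ell_2}\}$, $\{v^j_{\ell_2},v^j_{\ell_3}\}$. In any spanning tree at least one spoke per gadget survives; the chosen spoke is read off as ``this literal is the one satisfying $c_j$.'' Consistency (never picking both a literal and its negation across two different clause gadgets) is enforced purely by the cost function at $r$: two spokes whose literals are negations of each other cost $10$ to traverse through $r$, all other spoke pairs cost $5$, and all non-hub transitions cost $1$; this $(1,5,10)$ table satisfies the triangle inequality by inspection, and the budget $k=9$ is set so that the $10$ is lethal while everything else is affordable. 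Your plan to build separate variable cycles and then have clause gadgets ``test'' the variables' phases requires information to flow between gadgets, which can only happen at $\rho$, and again runs into the color-budget problem. The paper's observation that the truth assignment can be \emph{extracted from the clause-gadget choices} (rather than stored in variable gadgets) is the key simplification you are missing, and it is what keeps the whole graph a fan of triangles sharing one apex, trivially outerplanar and with only $r$ of high degree.

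In short: your reduction target and high-level strategy (hub vertex, outerplanar fan, budget $9$, triangle-inequality-respecting costs) are aligned with the paper, but the three-color restriction you imposed on yourself is both not required by the statement and fatal to your gadget design, and the variable gadgets you propose are superfluous once you let the cost function at $r$ do the consistency work.
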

\begin{proof} We present a simple reduction from \textsc{3-Sat}. Given a formula $\varphi$ with $n$ variables and $m$ clauses, we create an instance $(G, \chi, c)$ of \pb as follows.
We may assume that there is no clause in $\varphi$ that contains a literal and its negation.
The graph $G$ contains a distinguished vertex $r$ and, for each clause $c_j = (\ell_1 \vee \ell_2 \vee \ell_3)$, we add a clause gadget $C_j$ consisting of three vertices $v^j_{\ell_1}, v^j_{\ell_2}, v^j_{\ell_3}$ and five edges $\{r, v^j_{\ell_1}\}$, $\{r, v^j_{\ell_2}\}$, $\{r, v^j_{\ell_3}\}$, $\{v^j_{\ell_1}, v^j_{\ell_2}\}$, and $\{v^j_{\ell_2}, v^j_{\ell_3}\}$. This completes the construction of $G$. Note that $G$ does not depend on the formula $\varphi$ except for the number of clause gadgets, and that it is an outerplanar graph with only one vertex of degree greater than 3; see Figure~\ref{fig:Occitan-graph} for an illustration.

\begin{figure}[h!]
\begin{center}
\includegraphics[width=0.33\textwidth]{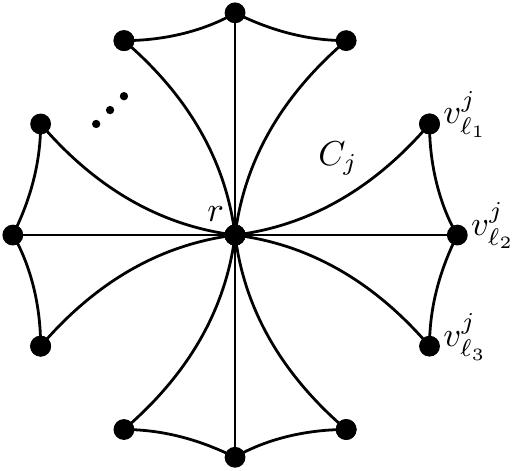}
\end{center}
\caption{Example of the graph $G$ built in the reduction of Theorem~\ref{thm:hard-bounded-tw}.}
\label{fig:Occitan-graph}
\end{figure}

Let us now define the coloring $\chi$ and the cost function $c$. For simplicity, we associate a distinct color with
each edge of $G$, and thus, with slight abuse of notation, it is enough to describe the
cost function $c$ for every pair of incident edges of $G$, as we consider symmetric cost functions.
We will use just three different costs: 1, 5 and 10.
We set
\[
c(e_1,e_2) = \left\{
\begin{array}{ll}
10 & \textrm{if~} e_1 = \{r, v^{j_1}_{\ell_{i_1}}\}, e_2 = \{r, v^{j_2}_{\ell_{i_2}}\}\textrm{~and~} \ell_{i_1} = \overline{\ell_{i_2}} \textrm{~,~} \\
5 & \textrm{if~} e_1 = \{r, v^{j_1}_{\ell_{i_1}}\}, e_2 = \{r, v^{j_2}_{\ell_{i_2}}\}\textrm{~and~} \ell_{i_1} \not = \overline{\ell_{i_2}} \textrm{~, and~} \\
1 & \textrm{otherwise}.
\end{array}
\right.
\]
Note that this cost function satisfies the triangle inequality since the reload costs between edges incident to $r$ are 5 and 10, and the reload costs between edges incident to other vertices are 1.

We claim that $\varphi$ is satisfiable if and only if $G$ contains a spanning tree with diameter at most $9$.
Since $r$ is a cut vertex and every clause gadget is a connected component of $G - r$, in every spanning tree, the vertices of $C_j$ together with $r$ induce a tree with four vertices.
Moreover  the reload cost associated with a path from $r$ to a leaf of this tree is always at most $2$.
Therefore, the diameter of any spanning tree is at most $4$ plus the maximum reload cost incurred at $r$ by a path of $T$.

Assume first that $\varphi$ is satisfiable, fix a satisfying assignment $\psi$ of $\varphi$, and let us construct a spanning tree $T$ of $G$ with diameter at most $9$. For each clause $c_j$, the tree $T^j$ is the tree spanning $C_j$ and containing the edge between $r$ and an arbitrarily chosen literal of $c_j$ that is set to true by $\psi$. $T$ is the union of all the trees $T_j$ constructed in this way. The reload cost incurred at $r$ by any path of $T$ traversing it is at most $5$, since we never choose a literal and its negation. Therefore, it holds that $\diam(T)\leq 9$.

Conversely, let $T$ be a spanning tree of $G$ with $\diam(T)\leq 9$. Then, the reload cost incurred at $r$ by any path traversing it is at most $5$ since otherwise $\diam(T) \geq 10$. For every $j \in [m]$, let $T_j$ be the subtree of $T$ induced by $C_j$ and let $\{r, v^j_{\ell_{i_j}} \}$ be one of the edges incident to $r$ in $T_j$. We note that for any pair of clauses $c_{j_1}, c_{j_2}$ we have $\ell_{i_{j_1}} \neq \overline{ \ell_{i_{j_2}}}$, since otherwise a path using these two edges would incur a cost of $10$ at $r$. The variable in the literal $\ell_{i_j}$ is set by $\psi$ so that $\ell_{i_j}$ is true. All the other variables are set to an arbitrary value by $\psi$. Note that $\psi$ is well-defined, since we never encounter a literal and its negation during the assignment process. It follows that $\psi$ is a satisfying assignment of $\varphi$.
\end{proof}


We proceed with the {\sf para}-$\np$-hardness result with parameter $k + \Delta$.


\begin{theorem}\label{thm:hard-bounded-degree}
The \pb problem is $\np$-hard on graphs with $\Delta \leq 3$, even with two different costs, $k=0$, and bounded number of colors. In particular, it is {\sf para}-$\np$-hard parameterized by $k$ and $\Delta$.
\end{theorem}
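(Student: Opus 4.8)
The plan is to reduce from a restricted version of \textsc{3-Sat} (with the precise restriction chosen so that the graph we build has maximum degree at most $3$). The point of setting $k=0$ together with allowing only the two costs $0$ and $1$ is the following reformulation: a spanning tree $T$ of $G$ satisfies $\diam(T)\le 0$ (equivalently $\diam(T)=0$) if and only if for every vertex $v$ and every two distinct edges $e,e'\in E(T)$ incident to $v$ one has $c(e,e')=0$, since $\dist_{T}^{c}$ is nonnegative and integral and a path costs $0$ exactly when each of its internal vertices is traversed through a compatible pair of edges. So the question becomes whether $G$ admits a spanning tree that at every vertex uses only pairwise-compatible edges; call such a tree \emph{conflict-free}. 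Since $\Delta(G)\le 3$, whenever a vertex is incident to three edges that are not pairwise compatible, its degree in any conflict-free tree is forced down to $2$ --- using the unique compatible pair, if there is one --- or to $1$; this is the single local mechanism every gadget will exploit.

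I would use a constant-size palette of colors, with the cost-$1$ pairs forming a fixed small ``conflict graph'' and all other pairs costing $0$; one distinguished color is \emph{neutral}, compatible with everything, used for all the ``plumbing'' edges. The construction has three kinds of degree-$\le 3$ gadgets. (i) A \emph{variable gadget} for each variable $x$, built as a cyclic chain of constant-size blocks, one block per occurrence of $x$, designed so that its conflict-free spanning subtrees correspond to exactly one of two global states, ``$x$ true'' or ``$x$ false'', the local compatibility constraints between consecutive blocks forcing this state to be uniform along the whole chain. (ii) From each block, one \emph{output edge} leaves towards the clause containing that occurrence, colored so that it can be added to a conflict-free tree only when the chain is in the state matching the corresponding literal (positive or negated). (iii) A \emph{clause gadget} for each clause $(\ell_1\vee\ell_2\vee\ell_3)$: a constant-size subgraph whose only attachment to the rest of the graph is through the three relevant output edges, arranged so that it has a conflict-free spanning subtree connecting it to the rest if and only if at least one of these output edges is used, i.e.\ at least one literal is true. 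Finally, neutral connector paths link the variable gadgets into one connected graph; a routine check ensures every vertex still has degree at most $3$, and that only $O(1)$ colors are used.

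Correctness then splits into the two usual directions. From a satisfying assignment one builds a conflict-free spanning tree by putting each variable chain into the matching state, choosing in each clause a literal made true, and connecting the clause gadget through that literal's output edge; the neutral coloring of all interfaces guarantees no conflict arises. Conversely, from a conflict-free spanning tree one reads off an assignment: each variable chain, being conflict-free and spanning, lies in a well-defined state (a ``mixed'' state would either create a conflict at a block boundary or leave a block vertex disconnected), and each clause gadget, being spanned without conflict, must be connected to the rest through some active output edge, so the corresponding literal is true; hence every clause is satisfied.

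I expect the real obstacle to be the simultaneous satisfaction of the \emph{connectivity} requirement and the \emph{local conflict-freeness} requirement under the hard ceiling $\Delta\le 3$: with at most three edges at every vertex there is essentially no slack, so one cannot simply drop an offending edge to kill a conflict without risking disconnecting a vertex. Consequently each gadget must be engineered so that \emph{every} locally admissible choice still admits a spanning, conflict-free continuation, while the global choices remain forced to be consistent, and all of this using only $O(1)$ colors and the two costs $0$ and $1$. Pinning down a clause gadget (and the variable blocks) with these properties is the crux; the remaining verifications are routine.
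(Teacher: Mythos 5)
Your high-level plan --- reduce from a bounded-occurrence variant of \textsc{3-Sat}, observe that with costs in $\{0,1\}$ and $k=0$ the task is exactly to find a spanning tree all of whose incident edge pairs at every vertex have cost $0$, design a two-state variable gadget with one output edge per occurrence, and a clause gadget whose only connection to the rest of the graph is through these output edges --- is precisely the paper's strategy (the paper uses Tovey's restriction, where each variable occurs at most three times and one may assume exactly three). Your worry about the lack of slack under $\Delta\le 3$ is also the right thing to worry about. However, you explicitly defer the actual construction of the gadgets, and that construction is the entire content of the theorem: nothing in your plan shows that a clause gadget or a variable block with all the stated properties can be realized simultaneously with degree at most $3$, two costs, and $\Ocal(1)$ colors. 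As written, this is a genuine gap rather than a routine verification.

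It may help to know that the paper's gadgets are considerably simpler than your ``cyclic chain of constant-size blocks'' and ``constant-size subgraph'' phrasing suggests. The clause gadget is a \emph{single vertex} $c_j$ with at most three incident output edges, colored from a six-color palette so that any two of them conflict; this forces $c_j$ to be a leaf of any conflict-free spanning tree, which already encodes ``exactly one active output edge, no conflict.'' The variable gadget for $x_i$ is a $4$-cycle $v_i p_i r_i n_i$ plus a pendant vertex $u_i$ attached to $v_i$; the $u_i$'s are chained by neutral-colored edges into a connecting path. The two cycle edges $\{p_i,r_i\}$ and $\{r_i,n_i\}$ meeting at $r_i$ get special colors $1$ and $2$ with $c(1,2)=1$; since a spanning tree drops exactly one edge of the cycle and the other two cycle edges are neutral, the dropped edge is forced to be one of these two, yielding the binary truth state. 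Output edges at positive occurrences conflict with color $1$, those at negative occurrences with color $2$, so an output edge can enter the tree only when its literal is made true. The slack you feared was missing comes from the clause vertex needing only degree one, and from all inter-gadget connectivity being routed through the neutral $u_i$ path, which never creates a conflict.
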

\begin{proof}
We present a reduction from the restriction of \textsc{3-Sat} to formulas where each variable occurs in at most three clauses; this problem was proved to be $\np$-complete by Tovey~\cite{Tov84}. It is worth mentioning that one needs to allow for clauses of size two or three, as if all clauses have size exactly three, then it turns out that all instances are satisfiable~\cite{Tov84}.

We may assume that each variable occurs at least once positively and at least once negatively, as otherwise we may set such variable $x$ to the value that satisfies all clauses in which it appears, and delete $x$ together with those clauses from the formula. We may also assume that each variable occurs {\sl exactly} three times in the given formula $\varphi$. Indeed, let $x$ be a variable occurring exactly two times in the formula. We create a new variable $y$ and we add to $\varphi$ two clauses $(x \vee y)$ and $(y \vee \overline{y})$. Let $\varphi'$ be the new formula. Clearly $\varphi$ and $\varphi'$ are equivalent, and both $x$ and $y$ occur three times in $\varphi'$. Applying these operations exhaustively clearly results in an equivalent formula where each variable occurs exactly three times. Summarizing, we may assume the following property:

\begin{itemize}
\item[$\maltese$] \emph{Each variable occurs exactly three times in the given formula $\varphi$ of \textsc{3-Sat}. Moreover, each variable occurs at least once positively and at least once negatively in $\varphi$.}
\end{itemize}

\noindent Given a formula $\varphi$ with $n$ variables and $m$ clauses, we create an instance $(G, \chi, c)$ of \pb with $\Delta(G)\leq 3$ as follows. Let the variables in $\varphi$ be $x_1, \ldots, x_n$.  For every  $i \in [n]$, we add to $G$ a \emph{variable gadget} consisting of five vertices $u_i, v_i, p_i,r_i,n_i$ and five edges $\{u_i, v_i\}, \{v_i, p_i\}, \{p_i, r_i\}, \{r_i, n_i\}$, and $\{n_i, v_i\}$. For every $ i  \in [n-1]$, we add the edge $\{u_i, u_{i+1}\}$. For every $ j \in [m]$, the clause gadget in $G$ consists of a single vertex $c_j$. We now proceed to explain how we connect the variable and the clause gadgets. For each variable $x_i$, we connect vertex $p_i$ (resp. $n_i$) to one of the vertices corresponding to a clause of $\varphi$ in which $x_i$ appears positively (resp. negatively). Finally, we connect vertex $r_i$ to the remaining clause in which $x_i$ appears (positively or negatively). Note that these connections are well-defined because of property~$\maltese$. This completes the construction of $G$, and note that it indeed holds that $\Delta(G)\leq 3$; see Figure~\ref{fig:hard3}(a) for an example of the construction of $G$ for a specific satisfiable formula $\varphi$ with $n=4$ and $m=5$.

Let us now define the coloring $\chi$ and the cost function $c$. We use nine colors $1,2,\ldots,9$ associated with the edges of $G$ as follows. For $i \in [n]$, we set $\chi(\{p_i,r_i\}) = 1$ and $\chi(\{r_i,n_i\}) = 2$, and all edges incident to $u_i$ or $v_i$ have color 3. Finally, for $j \in [m]$, we color the edges containing $c_j$ with colors in $\{4,5,6,7,8,9\}$, so that incident edges get different colors, and edges corresponding to positive (resp. negative) occurrences get colors in $\{4,5,6\}$ (resp. $\{7,8,9\}$); note that such a coloring always exists as each clause contains at most three variables; see Figure~\ref{fig:hard3}(b). We will use only two costs, namely 0 and 1, and recall that we consider just symmetric cost functions. We set $c(1,2)=1$, $c(1,i) = 1$  for every $i \in \{4,5,6\}$, $c(2,i) = 1$  for every $i \in \{7,8,9\}$, and $c(i,j) = 1$  for every distinct $4 \leq i,j \leq 9$. All other costs are set to 0.

\begin{figure}[htb]
\begin{center}
\includegraphics[width=.95\textwidth]{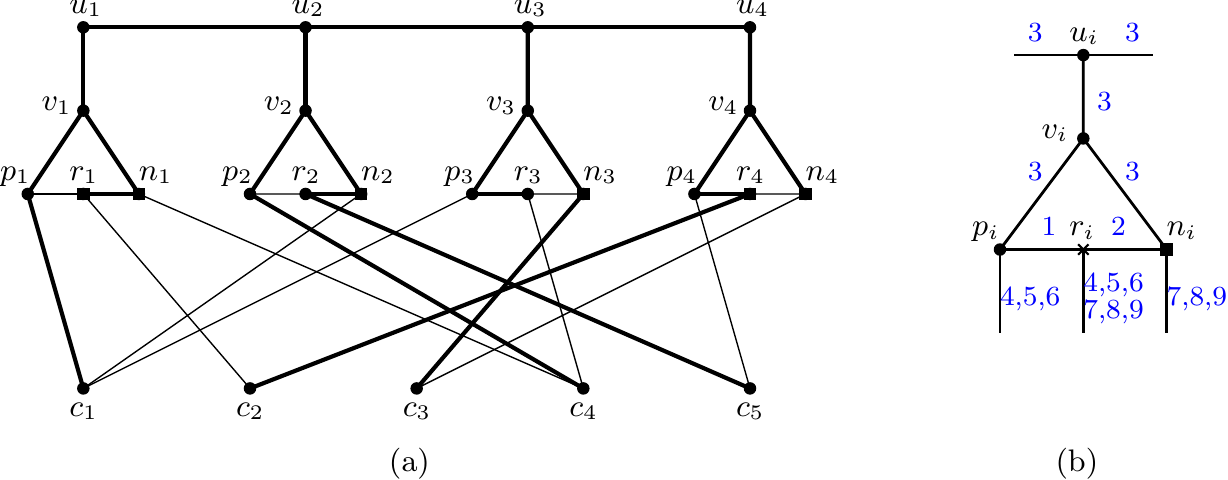}
\end{center}
\caption{(a) Graph $G$ described in the reduction of Theorem~\ref{thm:hard-bounded-degree} for the formula $\varphi = (x_1 \vee \overline{x_2} \vee x_3) \wedge (\overline{x_1}  \vee \overline{x_4}) \wedge (\overline{x_3} \vee \overline{x_4}) \wedge (\overline{x_1} \vee x_2 \vee x_3) \wedge (x_2 \vee x_4)$. The vertices $p_i,r_i,n_i$ corresponding to positive (resp. negative) occurrences are depicted with circles (resp. squares). An assignment satisfying $\varphi$ is given by $x_1 = x_2 = 1$ and $x_3 = x_4 = 0$,  and a solution spanning tree $T$ with diameter 0 is emphasized with thicker edges. (b) The (possible) colors associated with edge edge of $G$ are depicted in blue.}
\label{fig:hard3}
\end{figure}

We claim that $\varphi$ is satisfiable if and only if $G$ contains a spanning tree with diameter $0$. Assume first that $\varphi$ is satisfiable, fix a satisfying assignment $\psi$ of $\varphi$, and let us construct a spanning tree $T$ of $G$ with diameter $0$. For every $ i \in [n]$, tree $T$ contains all the edges containing vertex $u_i$ or $v_i$. If variable $x_i$ is set to true by  $\psi$, we include the edge $\{r_i,n_i\}$ to $T$, and otherwise, that is, if  $x_i$ is set to false by  $\psi$, we include the edge $\{p_i, r_i\}$. Finally, for $j \in [m]$, we add to $T$ one of the edges containing $c_j$ that corresponds to a literal satisfying that clause. It can be easily checked that $T$ is a spanning tree of $G$ with diameter 0; see Figure~\ref{fig:hard3}(a) for an example.

Conversely, let $T$ be a spanning tree of $G$ with diameter $0$. Since the cost associated with any two distinct colors in $\{4,5,6,7,8,9\}$ is 1, it follows that, for $j \in [m]$, vertex $c_j$ has degree one in $T$. Therefore, the variable gadgets need to be connected in $T$ via the vertices $u_i$, implying that all edges containing $u_i$, for $i \in [n]$, belong to $T$. For $i \in [n]$, in order for $T$ to contain all four vertices $v_i,p_i,r_i,n_i$, by construction of $G$ and since all clause vertices have degree one in $T$, tree $T$ necessarily contains  exactly three out of the four edges of the 4-cycle defined by
$v_i,p_i,r_i,n_i$. Since $c(1,2)=1$ and $\diam(T)=0$, the missing edge is necessarily either $\{p_i,r_i\}$ or $\{r_i,n_i\}$.  We define an assignment $\psi$ of the variables $x_1, \ldots, x_n$ as follows: for $i \in [n]$, if the edge $\{r_i,n_i\}$ belongs to $T$, we set $x_i$ to true; otherwise, we set $x_i$ to false. We claim that $\psi$ satisfies $\varphi$. Indeed, let $c_j$ be a vertex in $G$ corresponding to an arbitrary clause of $\varphi$. Since $c_j$ has degree one in $T$, it is attached to exactly one of the vertices $p_i,r_i,n_i$ for some $i \in [n]$. Suppose that the edge containing $c_j$ corresponds to a positive occurrence of $x_i$, the other case being symmetric. Then, by construction, necessarily the edge containing $c_j$ is either $\{c_j, p_i\}$ or $\{c_j, r_i\}$. In both cases, if the edge $\{p_i,r_i\}$ were in $T$, this edge together with $\{c_j, p_i\}$ or $\{c_j, r_i\}$ would incur a cost of 1 in $T$, contradicting the hypothesis that $\diam(T)=0$. Therefore, the edge $\{p_i,r_i\}$ cannot be in $T$, implying that the edge $\{r_i,n_i\}$ must be in $T$. According to the definition of the assignment $\psi$, this implies that variable $x_i$ is set to true in $\psi$, and therefore the clause corresponding to $c_j$ is satisfied by variable $x_i$. This concludes the proof.

\end{proof}


Note that in the above reduction the cost function $c$ does {\sl not} satisfy the triangle inequality at vertices $p_i$ or $n_i$ for $i \in [n]$, and recall that this is unavoidable since otherwise the problem would be polynomial~\cite{WiSt01}. It is worth mentioning that using the ideas in the proof of~\cite[Theorem 4 of the full version]{GOP+16} it can be proved that the \pb problem is also $\np$-hard on {\sl planar} graphs with $\Delta \leq 4$, $k=0$, and bounded number of colors; we omit the details here.



Finally, we present  the {\sf para}-$\np$-hardness result with parameter $\tw + \Delta$.

\begin{theorem}\label{thm:hard-tw-delta}
The \pb problem is $\np$-hard on planar graphs with $\tw \leq 3$ and $\Delta \leq 3$. In particular, it is {\sf para}-$\np$-hard parameterized by $\tw$ and $\Delta$.
\end{theorem}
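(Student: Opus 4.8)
The plan is to reduce from \textsc{Partition}: given positive integers $a_1,\dots,a_n$ with $\sum_{i=1}^{n}a_i=2S$, I would build in polynomial time an instance $(G,\chi,c)$ of \pb together with a target $k:=2S$, so that $G$ is planar with $\tw(G)\le 3$ and $\Delta(G)\le 3$, and $G$ admits a spanning tree of reload diameter at most $k$ if and only if some $A\subseteq[n]$ has $\sum_{i\in A}a_i=S$. Since \textsc{Partition} is $\np$-hard and since the reload costs and $k$ are allowed to be large (they are polynomial in the input), this proves the theorem, $\tw+\Delta$ being bounded by an absolute constant.

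The graph $G$ would run two parallel ``wires'', an $A$-wire and a $B$-wire, through $n$ ``switch'' gadgets $\gamma_1,\dots,\gamma_n$ in series. Gadget $\gamma_i$ has an $A$- and a $B$-input port and an $A$- and a $B$-output port (consecutive gadgets are linked by two buffer edges, which is exactly what keeps every port at degree $3$), and internally contains a ``heavy'' branch $h_i\!-\!m_i\!-\!h_i'$ and a ``light'' branch $l_i\!-\!l_i'$, wired through two $4$-cycles $\{a_i^{\mathrm{in}},h_i,b_i^{\mathrm{in}},l_i\}$ and $\{a_i^{\mathrm{out}},h_i',b_i^{\mathrm{out}},l_i'\}$ in such a way that in any spanning tree exactly one of two resolutions occurs: the $A$-wire uses the heavy branch and the $B$-wire the light branch, or vice versa. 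One end of the two wires is fed by a degree-$3$ junction $z$ carrying a pendant ``reference'' vertex $\rho$; the other ends are terminals $u_A$ and $u_B$. Using one fresh colour per edge, so that $c$ is a symmetric cost on pairs of incident edges, I would set: the reload at $m_i$ between the two heavy-branch edges equal to $a_i$; the reload at $z$ between the $\rho$-edge and each wire-start edge equal to $S$; a constant number of ``prohibitive'' reloads of value $>k$ at the ``crossing'' edge pairs of each $4$-cycle (for instance at $h_i$ between the $A$-input edge and the $B$-input edge, and symmetrically at $h_i',l_i,l_i'$); and all remaining reloads $0$. Each gadget is planar (contracting its heavy and light branches turns it into $K_{2,4}$) and of treewidth $2$, so $G$ is planar with $\tw(G)\le 3$ and $\Delta(G)\le 3$.

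Correctness would follow in two steps. First, the prohibitive reloads force every spanning tree $T$ with $\diam(T)\le k$ to be \emph{canonical}: in each gadget one of the two legal resolutions is used, the buffer edges and the $\rho$-edge are present, and the $A$-wire and $B$-wire are vertex-disjoint $z$-rooted paths meeting only at $z$. A canonical tree is then described by the set $A$ of indices $i$ for which its $A$-wire passes through $\gamma_i$'s heavy branch, and a direct computation gives $\dist_{T_A}(\rho,u_A)=S+\sum_{i\in A}a_i$, $\dist_{T_A}(\rho,u_B)=S+\sum_{i\notin A}a_i$, $\dist_{T_A}(u_A,u_B)=2S$, and every other pair of vertices at distance at most the maximum of these three values; hence $\diam(T_A)=S+\max\bigl(\sum_{i\in A}a_i,\ \sum_{i\notin A}a_i\bigr)$. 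Since $\sum_{i\in A}a_i+\sum_{i\notin A}a_i=2S$, this is at most $k=2S$ exactly when both sums equal $S$. Therefore $G$ has a spanning tree of diameter $\le k$ iff some canonical $T_A$ does iff \textsc{Partition} is a yes-instance.

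The real work lies in the gadget and cost design. One must check that in \emph{every} canonical tree $T_A$ the charge $a_i$ is added to $\dist(\rho,u_A)$ when $i\in A$, to $\dist(\rho,u_B)$ when $i\notin A$, and to no other distance; this is precisely why the two wires must be kept edge-disjoint and why the heavy and light branches are surrounded by $0$-cost reloads, which localises $a_i$ to the single reload at $m_i$. Dually, one must position the prohibitive reloads so that any \emph{non}-canonical resolution of a $4$-cycle — for instance both wires routed through $m_i$, or a wire ``jumping'' from one side of a $4$-cycle to the other — necessarily creates a short path of prohibitive cost between two nearby vertices, while canonical trees never use those reload pairs at all. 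The remaining verifications — planarity (a ladder-like graph), the degree and treewidth bounds, and the bookkeeping that every non-highlighted distance in $T_A$ stays $\le 2S$ — are routine.
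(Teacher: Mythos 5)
Your reduction is correct in outline and genuinely differs from the paper's in one structural respect worth pointing out. Both proofs reduce from \textsc{Partition}, both use a chain of per-item gadgets through which two parallel ``wires'' travel from a common junction, both charge $a_i$ to whichever wire takes the heavy detour in gadget~$i$, and both use a prohibitive reload of value $>k$ to force the spanning tree into the intended shape. Where you diverge is in how the constraint ``each wire's total charge must be at most $S$'' is imposed. The paper takes \emph{two} disjoint copies of the wire-chain graph and joins them by an edge between the two roots; the diameter then decomposes as a sum of the two root eccentricities, each of which is $\ge B/2$, so the budget $B$ forces both to equal $B/2$ exactly. You instead keep a \emph{single} copy and hang a pendant reference vertex $\rho$ off the junction $z$, with reload cost $S$ between the $\rho$-edge and each wire-start; then $\dist(\rho,u_A)=S+\sum_{i\in A}a_i$ already exceeds $k=2S$ as soon as the $A$-wire overshoots, so the budget forces both sides to $S$. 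This pendant-$\rho$ trick roughly halves the size of the construction and is arguably more transparent than the paper's two-copy argument.

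On the other hand, your per-item gadget (two $4$-cycles, a heavy branch $h_i\!-\!m_i\!-\!h_i'$, a light branch $l_i\!-\!l_i'$; nine vertices) is heavier than the paper's, which is just a $2\times 3$ ladder on six vertices with the middle rung carrying the charge $a_i$ at one of its endpoints. More significantly, the ``canonicality'' step of your argument needs more care than your sketch suggests. You place prohibitive reloads only at $h_i,h_i',l_i,l_i'$ between the corresponding $A$-side and $B$-side edges, but a spanning tree could in principle route, say, $a_i^{\mathrm{in}}\!-\!l_i\!-\!l_i'\!-\!a_i^{\mathrm{out}}\!-\!h_i'\!-\!m_i\!-\!h_i\!-\!b_i^{\mathrm{in}}$ and then attach $b_i^{\mathrm{out}}$ through the buffer edge to the next gadget rather than inside $\gamma_i$; this makes the two ``wires'' merge at $a_i^{\mathrm{out}}$ without ever using a prohibitive reload pair, and the resulting tree is not of the canonical two-path form you rely on for the distance computation. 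You would need either additional prohibitive reloads (e.g., between the two port edges at $a_i^{\mathrm{out}}$ and $b_i^{\mathrm{out}}$) or an argument that such a merged tree still has some cheap witness path exceeding $k$. The paper sidesteps this by keeping the gadget a simple ladder, where the admissible spanning trees are easier to enumerate. So: the pendant-$\rho$ idea is a clean simplification, the gadget is overengineered and underverified; net, a comparable but different route, provided you nail down the canonicality argument.
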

\begin{proof} We present a reduction from the \textsc{Partition} problem, which is a typical example of a {\sl weakly} $\np$-complete problem~\cite{GJ79}. An instance of \textsc{Partition} is a multiset $S = \{ a_1, a_2, \ldots, a_n\}$ of $n$ positive integers, and the objective is to decide whether $S$ can be partitioned into two subsets $S_1$ and $S_2$ such that
$\sum_{x\in S_1} x= \sum_{x \in S_2} x = \frac{B}{2}$ where $B=\sum_{x \in S} x$.

Given an instance $S = \{ a_1, a_2, \ldots, a_n\}$ of \textsc{Partition}, we create an instance $(G, \chi, c)$ of \pb as follows.
The graph $G$ contains a vertex $r$, called the root, and for every integer $a_i$ where $i \in [n]$, we add to $G$ six vertices $u_i, u_i', m_i, m_i',d_i, d_i'$ and seven edges $\{u_i,u_i'\}$, $\{m_i,m_i'\}$, $\{d_i,d_i'\}$, $\{u_i,m_i\}$, $\{u_i',m_i'\}$, $\{m_i,d_i\}$, and $\{m_i',d_i'\}$. We denote by $H_i$ the subgraph induced by these six vertices and seven edges. We add the edges $\{r, u_1\}, \{r, d_1\}$ and, for $ i \in [n-1]$, we add the edges $\{u_i', u_{i+1}\}$ and $\{d_i', d_{i+1}\}$.
Let $G'$ be the graph constructed so far.
We then define $G$ to be the graph obtained from two disjoint copies of $G'$ by adding an edge between both roots.
Note that $G$ is a planar graph with $\Delta(G)= 3$ and $\tw(G)= 3$. (The claimed bound on the treewidth can be easily seen by building a {\sl path} decomposition of $G$ with consecutive bags of the form $\{u_{i-1}', d_{i-1}', u_{i}, d_{i}\}, \{u_i, d_i, m_i, u_i'\},   \{ d_i, m_i, u_i', m_i'\},  \{ d_i, u_i', m_i', d_i'\}, \ldots$.)

\begin{figure}[htb]
\begin{center}
\hspace*{-.0cm}\includegraphics[width=.98\textwidth]{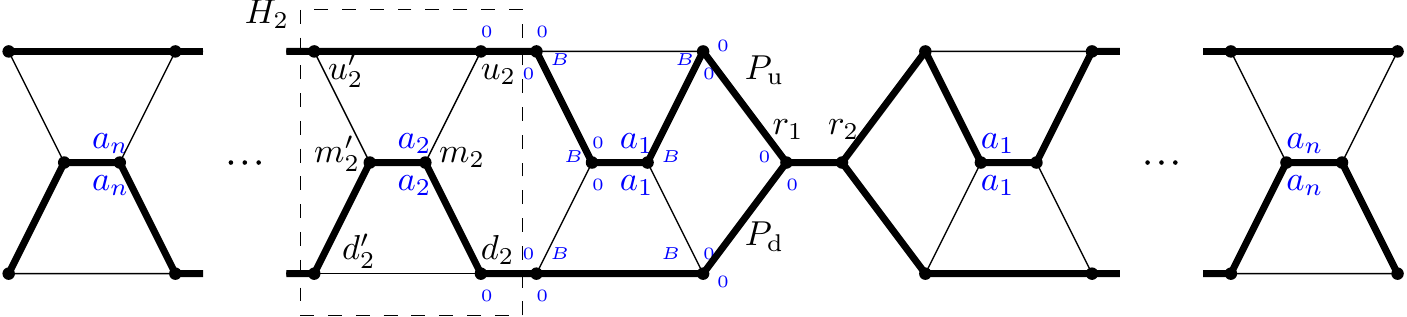}
\end{center}
\caption{Graph $G$ built in the reduction of Theorem~\ref{thm:hard-tw-delta}, where the reload costs are depicted in blue at the angle between the two corresponding edges. For better visibility, not all costs and vertex labels are depicted. The typical shape of a solution spanning tree is highlighted with thicker edges.}
\label{fig:partition}
\end{figure}

Let us now define the coloring $\chi$ and the cost function $c$. Again, for simplicity, we associate a distinct color with
each edge of $G$,
and thus it is enough to describe the
cost function $c$ for every pair of incident edges of $G$.
We define the costs for one of the copies of $G'$, and the same costs apply to the other copy. For every edge $e$ being either $\{u_i',u_{i+1}\}$ or $\{d_i',d_{i+1}\}$, for $1\leq i \leq n-1$, we set $c(e,e')=0$ for each of the four edges $e'$ incident with $e$. For every edge
$e=\{m_i, m_i'\}$, for $1\leq i \leq n$, we set $c(\{u_i, m_i\}, e) = c(\{d_i, m_i\}, e) = a_i$ and $c(e, \{m_i', u_i'\}) = c(e, \{m_i', d_i'\}) = 0$. All costs associated with the two edges containing $r$  in one of the copies $G'$ are set to $0$. For $e = \{r_1,r_2\}$, where $r_1$ and $r_2$ are the roots of the two copies of $G'$, we set $c(e,e')=0$ for each of the four edges $e'$ incident to $e$. The cost associated with any other pair of edges of $G$ is equal to $B + 1$; see Figure~\ref{fig:partition} for an illustration, where (some of) the reload costs are depicted in blue, and a typical solution spanning tree of $G$ is drawn with thicker edges.

%

We claim that the instance $S$ of \textsc{Partition} is a \textsc{Yes}-instance if and only if $G$ has a spanning tree with diameter at most $B$.

Assume first that $S$ is a \textsc{Yes}-instance of \textsc{Partition}, and let $S_1, S_2 \subseteq S$ be a solution. We define a spanning tree $T$ of $G$ with diameter $B$ as follows. We describe the subtree of $T$ restricted to one of the copies of $G'$, say $T'$. The spanning tree $T$ of $G$ is defined by union of two symmetric copies of $T'$, one in each copy of $G'$, together with the edge $\{r_1, r_2\}$. Tree $T'$ consists of the two edges $\{r, u_1\}, \{r, d_1\}$ and two paths $P_{\text{u}}, P_{\text{d}}$
(corresponding to the upper and the lower path, respectively defined as follows; see Figure~\ref{fig:partition}). For $i \in [n-1]$, the path $P_{\text{u}}$ (resp. $P_{\text{d}}$) contains the edge $\{u_i', u_{i+1}\}$ (resp. $\{d_i', d_{i+1}\}$), and if $a_i \in S_1$ we add the three edges $\{u_i,m_i\}, \{m_i, m_i'\}, \{m_i', u_i'\}$ to $P_{\text{u}}$, and the edge $\{d_i, d_i'\}$ to $P_{\text{d}}$. Otherwise, if $a_i \in S_2$, we add the edge $\{u_i, u_i'\}$ to $P_{\text{u}}$ and the three edges $\{d_i,m_i\}, \{m_i, m_i'\}, \{m_i', d_i'\}$ to $P_{\text{d}}$. Since
$\sum_{x \in S_1} x= \sum_{x \in S_2} x = \frac{B}{2}$,
it can be easily checked that both paths $P_{\text{u}}$ and $P_{\text{d}}$ have diameter $\frac{B}{2}$ in each of the two copies of $G'$, and therefore $T$ is a spanning tree of $G$ with diameter $B$.

Conversely, let $T$ be a spanning tree of $G$ with $\diam(T) \leq B$. Let $G_1,G_2$ be the two copies of $G'$ in $G$, and let $r_1,r_2$ be their respective roots. Since the edge $\{r_1, r_2\}$ is a bridge of $G$, it necessarily belongs to $T$.  By the construction of $G$, the choice of the reload costs, and since $\diam(T) \leq B -1$, it can be verified that, for $j \in \{1,2\}$, $T \cap G_j$ consists of two paths $P^j_{\text{u}}, P^j_{\text{d}}$ intersecting at the root $r_i$. Furthermore, $P^j_{\text{u}}$ (resp. $P^j_{\text{d}}$) contains the edge $\{u_i', u_{i+1}\}$ (resp. $\{d_i', d_{i+1}\}$) of the corresponding copy of $G'$, and the intersection of $P^j_{\text{u}}$ (resp. $P^j_{\text{d}}$) with the subgraph $H_i$ in the corresponding copy of $G'$ is given by either the three edges $\{u_i,m_i\}, \{m_i, m_i'\}, \{m_i', u_i'\}$ (resp. $\{d_i,m_i\}, \{m_i, m_i'\}, \{m_i', d_i'\}$) or by the edge $\{u_i, u_i'\}$ (resp. $\{d_i, d_i'\}$). Therefore,  for $j \in \{1,2\}$ and $x \in \{\text{u},\text{d}\}$, it holds that $d_x^j := \diam(P^j_{x}) =\sum_{i \in I^j_x} a_i$, where $I^j_x$ is the set of indices $i \in \{1, \ldots,n \}$ such that the edge $\{m_i, m_i'\}$ belongs to path $P^j_{x}$. Note also that, for $j \in \{1,2\}$, by construction we have that $d^j_{\text{u}} + d^j_{\text{d}} = \sum_{i=1}^n a_i$, implying in particular that
$\max\{d^j_{\text{u}},d^j_{\text{d}}\} \geq \frac{B}{2}$. On the other hand, by the structure of $T$ it holds that
\begin{equation}
\label{eq:paths} B \geq \diam(T)  \geq \max\{d^1_{\text{u}},d^1_{\text{d}}\} + \max\{d^2_{\text{u}},d^2_{\text{d}}\} \geq \frac{B}{2}  + \frac{B}{2} = B.
\end{equation}
Equation~(\ref{eq:paths}) implies, in particular, that $d^1_{\text{u}} = d^1_{\text{d}} = \frac{B}{2}$. In other words, $\sum_{i \in I^1_{\text{u}}} a_i = \sum_{i \in I^1_{\text{d}}} a_i = \frac{B}{2}$, thus the sets $I^1_{\text{u}}, I^1_{\text{d}}$ define a solution of \textsc{Partition}. This completes the proof. \end{proof}



\section{A polynomial-time algorithm on cactus graphs}
\label{sec:cactus}

In this section we present a polynomial-time algorithm to solve the \pb problem on cactus graphs, equivalently called \emph{cacti}. We first need some definitions.

A \emph{biconnected component}, or \emph{block}, of a graph is a maximal biconnected induced subgraph of it. The \emph{block tree} of a graph $G$ is a tree $T$ whose nodes are the cut vertices and the blocks of $G$. Every cut vertex is adjacent in $T$ to all the blocks that contain it. Two blocks share at most one vertex. The block tree of a graph is unique and can be computed in polynomial time~\cite{Die05}. A graph is a \emph{cactus} graph if every block of it is either a cycle or a single edge. We term these blocks as \emph{cycle block} and \emph{edge block}, respectively. It is well-known that cacti have treewidth at most 2. Given a forest $F$ and two vertices $x$ and $y$, we define $\cost_F(x,y)$ to be $\dist_{T}^{c}(x,y)$ if $x$ and $y$ are in the same tree $T$ of $F$ and where $c$ is the given reload cost function, and $\bot$ otherwise. Given a tree $T$ and a vertex $v \in V(T)$, we define the \emph{eccentricity} of $v$ in $T$ to be $\max_{v' \in V(T)}\cost_T(v,v')$.

We present a polynomial-time algorithm that solves the decision version of the problem, which we call {\sc Diameter-Tree*}: the input is an edge-colored graph $G$ and an integer $k$, and the objective is to decide whether the input graph $G$ has a spanning tree with reload cost diameter at most $k$. The algorithm to solve {\sc Diameter-Tree*} uses dynamic programming on the block tree of the input graph.

%

As we aim at a {\sl truly} polynomial-time algorithm to solve \pb, we {\sl cannot} afford  to solve the decision version for all values of $k$. To overcome this problem, we perform a double {\sl binary search} on the possible solution values and two appropriate eccentricities, resulting (skipping many technical details) in an extra factor of $(\log {\sf opt})^2$ in the running time of the algorithm, where ${\sf opt}$ is the diameter of a minimum cost spanning tree. This yields a polynomial-time algorithm solving \pb in cactus graphs.

Roughly speaking, the algorithm first fixes an arbitrary non-cut vertex $r$ of $G$ and the block $B_r$ that contains it. Then it processes the block tree of $G$ in a bottom-up manner starting from its leaves, proceeding towards $B_r$ while maintaining partial solutions for each block.
At each step of the processing, it uses an algorithm that solves an instance of the \textsc{2-Sat} problem as a subroutine. The intuition behind the instances of \textsc{2-Sat} created by the algorithm is the following.

Suppose that we are dealing with a cycle block $B$ of the block tree of $G$ (the case of an edge block being easier). Note that any spanning tree of $G$ contains all edges of $B$ except one. Let $G_B$ be the graph processed so far (including $B$). For each  potential partial solution $\Qcal$ in $G_B$, we associate, with each edge $e$ of $B$, a variable that indicates that $e$ is the non-picked edge by the solution in $B$. Now, for any {\sl two} such variables corresponding to intersecting blocks, we add to the formula of \textsc{2-Sat} essentially two types of clauses: the first set of clauses, namely $\phi_1$,  guarantees that the non-picked edges (corresponding to the variables set to \texttt{true} in the eventual assignment) indeed define a spanning tree of $G_B$, while the second one, namely $\phi_2$,  forces this solution to have diameter and eccentricity 
not exceeding the given budget $k$. The fact the $G$ is a cactus allows to prove that these constraints containing only two variables are enough to compute an optimal solution in $G_B$. 

\begin{theorem}\label{thm:cactus}
The \pb problem can be solved in polynomial time on cacti.
\end{theorem}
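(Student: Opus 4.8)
The plan is to design a dynamic programming algorithm over the block tree $T$ of the input cactus $G$, where the states carry enough information to reconstruct the reload-cost diameter of the eventual spanning tree. First I would fix an arbitrary non-cut vertex $r$ of $G$, let $B_r$ be the unique block containing $r$, and root $T$ at $B_r$, so that every block $B$ other than $B_r$ has a unique \emph{parent cut vertex} $p_B$ through which it attaches to the rest of $G$. For a block $B$, let $G_B$ denote the subgraph consisting of $B$ together with everything hanging below $B$ in the block tree. Since $G$ is a cactus, any spanning tree $T^\star$ of $G$ restricted to a cycle block $B$ is obtained by deleting exactly one edge of $B$; on an edge block $B$, the unique edge is always kept. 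Thus a partial solution on $G_B$ is described by choosing, for each cycle block below (and including) $B$, which edge to delete. The key structural observation is that, because $G$ is a cactus, the path in any spanning tree between two vertices $u,v$ decomposes into segments that each lie inside a single block, so the reload cost of that path is additive over the blocks it passes through, and the only coupling between consecutive blocks is the reload incurred at their shared cut vertex.

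The next step is to decide what to store at each block $B$ for a partial solution $\Qcal$ on $G_B$. For $\Qcal$ to eventually extend to a good global solution we need to know: (i) the eccentricity of $p_B$ within the tree $\Qcal$ spanning $G_B$, i.e.\ $\max_{v \in V(G_B)} \cost_{\Qcal}(p_B, v)$, since a path from outside into $G_B$ that enters through $p_B$ will pick up this much cost; and (ii) the internal diameter of $\Qcal$, to make sure no two vertices both buried in $G_B$ are already too far apart; and, more delicately, (iii) for each edge $e$ of $B$ incident to $p_B$, the eccentricity of $p_B$ conditioned on the path leaving $p_B$ through $e$, because the reload at $p_B$ depends on which edge of $B$ a traversing path uses together with which edge of the parent block it continues on. Because the problem is a decision problem with budget $k$ — run inside a double binary search over $\opt$ and over two eccentricity thresholds, exactly as described in the paragraph before the statement — these quantities only need to be compared against $k$ and two guessed eccentricity values, so the state space stays polynomial (indeed, we mostly need booleans ``this quantity is $\le$ the relevant threshold'').

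The heart of the algorithm is processing a single cycle block $B$ with children blocks $B_1,\dots,B_t$ attached at cut vertices $c_1,\dots,c_t \in V(B)$, given, recursively, the table of feasible partial-solution profiles for each $G_{B_i}$. I would encode the choice of which edge of $B$ to delete, together with the compatible choices inside each $G_{B_i}$, as a \textsc{2-Sat} instance: introduce for each edge $e \in E(B)$ a variable $x_e$ meaning ``$e$ is the deleted edge of $B$'', and for each child $B_i$ a small family of variables recording which feasible profile of $G_{B_i}$ is used. The clause set $\phi_1$ enforces that exactly the right structure is a spanning tree — ``at least one edge of $B$ is deleted'' is a single clause, ``at most one'' is the conjunction over pairs $\{x_e, x_{e'}\}$ of the clause $(\neg x_e \vee \neg x_{e'})$, and connectivity through the cut vertices is likewise a pairwise constraint. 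The clause set $\phi_2$ enforces the budget: for every pair of vertices $u, v$ whose connecting path in the eventual tree passes through (at most) two of the relevant blocks, additivity over blocks turns ``$\cost(u,v) \le k$'' into a constraint on at most two of our variables, hence a \textsc{2-Sat} clause; similarly the eccentricity-from-$p_B$ conditions become pairwise constraints coupling the edge-choice at $c_i$ inside $B$ with the profile chosen for $G_{B_i}$. Satisfiability of $\phi_1 \wedge \phi_2$ — checkable in linear time — then certifies existence of a partial solution on $G_B$ with a given profile, and reading off all satisfiable profiles populates $B$'s table; at the root block $B_r$ we simply ask whether any profile with internal diameter $\le k$ is feasible.

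The main obstacle I expect is verifying that two-variable constraints genuinely suffice — i.e.\ that no diameter-violating pair of vertices requires a clause over three or more blocks' worth of choices simultaneously. This is where cactus structure is essential: the tree-path between any two vertices traverses a sequence of blocks, but each intermediate cut vertex is traversed ``straight through'', and the delete-edge choice in an intermediate cycle block is forced (the two tree-edges the path uses in that block are determined once the entry and exit cut vertices and the deleted edge are fixed, but the \emph{cost contribution} of a straight traversal of an intermediate block, given its deleted edge, is a fixed number, and whether the deleted edge separates the two cut vertices is itself a single-variable condition). So the only ``free'' choices that jointly influence a given pairwise distance are the two endpoint blocks' configurations plus O(1) cut-vertex reloads, and each such reload is governed by one edge-choice variable. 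Making this precise — carefully enumerating the cases of where $u$ and $v$ sit relative to $B$ (both inside some $G_{B_i}$, one inside and one in $B$ itself, both in different children, the path going up through $p_B$, etc.), and confirming in each case that the feasibility of the required inequality is expressible as a 1- or 2-clause over the variables — is the bulk of the technical work, together with checking that the double binary search indeed reduces the optimization problem to $O((\log \opt)^2)$ calls of the decision procedure and that $\log \opt$ is polynomial in the input size (it is, since all reload costs are given in the input and a spanning tree has fewer than $n$ edges, so $\opt < n \cdot c_{\max}$).
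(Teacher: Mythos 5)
Your high-level plan — bottom-up DP over the block tree, a \textsc{2-Sat} instance per block, and binary search to turn the decision algorithm into an optimization one — matches the paper's strategy, and you correctly isolate the crucial question of whether pairwise constraints can possibly suffice. However, the concrete \textsc{2-Sat} encoding you propose has a genuine gap that would stop the proof from going through as written.

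You encode the choice of which edge of $B$ to delete using indicator variables $x_e$, and say ``at least one edge of $B$ is deleted'' is ``a single clause.'' For a cycle block with $m\ge 3$ edges this is a clause with $m$ literals $(x_{e_1}\vee\cdots\vee x_{e_m})$, which is not a \textsc{2-Sat} clause, so $\phi_1$ is no longer a \textsc{2-Sat} formula. Worse, once $B$'s own deleted-edge choice is a variable, the diameter constraint for $u\in G_{B_1}$ and $v\in G_{B_2}$ in two different children becomes a constraint over \emph{three} independent choices: $B_1$'s profile, $B_2$'s profile, \emph{and} which edge of $B$ is deleted (since the cost of the $c_1$–$c_2$ segment inside $B$ depends on the last one). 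That is a 3-variable constraint, not a 2-clause, so your argument for pairwise sufficiency breaks exactly where you feared it might. The paper sidesteps both problems by \emph{not} putting $B$'s edge choice into the \textsc{2-Sat}: it enumerates $e\in E(B)$ in an outer loop and computes a separate value $\lambda_B(e)$ for each, so inside the \textsc{2-Sat} the path through $B$ is fixed and the only remaining free choices are the children's, which is where pairwise constraints genuinely suffice.

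Two further ingredients of the paper's proof that your sketch leaves implicit but are load-bearing. First, for the children it does not use ``which edge is deleted'' indicators either; it uses one variable per tree piece $R_C^{x',e'}$ (the part of $\lambda_C(e')$ on the $x'$-side of $a(C)$), and these pieces are nested, which is what lets the spanning-tree constraint (``exactly one edge of $C$ is missing'') be expressed by the 2-clauses of $\phi_1$ together with a closure formula $\phi_0$ enforcing subgraph monotonicity — a formula your proposal has no analogue of. Second, $\lambda_B(e)$ is defined to be the partial tree that minimizes the eccentricity of $a(B)$ \emph{separately} on each of the two sides $T[S_B^{x,e}]$ and $T[S_B^{y,e}]$, and the paper proves an exchange lemma (Lemma~\ref{lemma:cor}) showing that this greedy normalization is safe, i.e.\ that if \emph{any} feasible extension of the partial solution exists then one extending $\lambda_B(e)$ exists. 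Your profile discussion gestures at this, but the independence of the two objectives (because the two sides meet only at $a(B)$) and the exchange argument are precisely what make it sound to store only one representative per $(B,e)$ rather than a table of all feasible profiles.
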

\begin{proof}
%

%

We start with a few more definitions needed in the algorithm. Given a graph $G$, we denote by $\spanning(G)$ the set of spanning trees of $G$, and by $\Bcal(G)$ the set of blocks of $G$. We omit $G$ from the notation if no ambiguity arises. We assume without loss of generality that $G$ is connected. For a block $B$, we denote by $\Ccal(B)$ the set of blocks that are immediate descendants of $B$ in the block tree. With a slight abuse (since we ignore the cut vertices in the block tree), we will refer to them as the \emph{children} of $B$. The \emph{parent} of a block $B$ is the first block after $B$ on the path from $B$ to $B_r$ in the block tree.
We denote by $G_B$ the subgraph of $G$ induced by the union of all descendants $B$ (including $B$ itself).
The \emph{anchor} $a(B)$ of a block $B$ is the cut vertex separating $B$ from its parent if $B \neq B_r$, and $r$ if $B=B_r$.

Let $B$ be a cycle block, $e=\{x,y\}$, and assume, without loss of generality, that $y \not = a(B)$. Clearly, the graph $G_B - e$ is connected. Moreover, $a(B)$ is a cut vertex of $G_B - e$ unless $x=a(B)$. For $z \in \set{x,y}$ we define $S_B^{z,e}$ as the set of vertices that are reachable from $z$ in $G_B - e$ without traversing $a(B)$. See Figure~\ref{fig:cactus} for an illustration. We denote the subgraph of $G_B - e$ induced by $S_B^{z,e}$, as $G_B^{z,e}$. Note that $z$ and $a(B)$ are in  $S_B^{z,e}$ and if $x = a(B)$ then $S_B^{x,e} = \{a(B)\}$. Since the degree of $a(B)$ in $G_B - e$ is at most two, a spanning tree $T$ of $G_B - e$ is a union of two spanning trees, a tree $T[S_B^{x,e}]$ spanning $G_B^{x,e}$ and a tree $T[S_B^{y,e}]$ spanning $G_B^{y,e}$. Moreover, $T[S_B^{x,e}]$ and $T[S_B^{y,e}]$ intersect only at $a(B)$.

\begin{figure}[h]
\centering
  \begin{tikzpicture}
\node at (2.2, -2) {$b$};
\node at (3, 0.3) {$r$};

\node at (1.7, -3) {$x$};
\node at (3.3, -3) {$y$};

\node at (1.5, -1) {$a(B)$};
\node at (2.5, -2.8) {$e$};
\node at (1, -5) {$S_B^{x,e}$};
    \def\a{2.5};
    \filldraw (1,0) circle (\a pt);
    \filldraw (2,0) circle (\a pt);
    \filldraw (3,0) circle (\a pt);
    \filldraw (2,-1) circle (\a pt);
    \filldraw (3,-1) circle (\a pt);
    \filldraw (4,-0.5) circle (\a pt);
    \filldraw (4,0.5) circle (\a pt);
    \filldraw (5,-0.5) circle (\a pt);
    \filldraw (5,0.5) circle (\a pt);
    \filldraw (5,-1.5) circle (\a pt);
    \filldraw (4,-1.5) circle (\a pt);

    \filldraw (2,-1) circle (\a pt);
    \filldraw (1.5,-2) circle (\a pt);
    \filldraw (2,-3) circle (\a pt);
    \filldraw (3,-3) circle (\a pt);
    \filldraw (3,-2) circle (\a pt);
    \filldraw (1,-2) circle (\a pt);
    \filldraw (0,-2) circle (\a pt);
    \filldraw (0,-3) circle (\a pt);
    \filldraw (1,-3) circle (\a pt);
    \filldraw (2,-4) circle (\a pt);

    \draw (1,0) -- (2,0) -- (3,0) -- (4,-0.5) -- (3,-1) -- (2,-1) -- (2,0);
    \draw (4,-.5) -- (4, 0.5)  -- (5, 0.5) -- (5, -0.5) -- (4,-0.5);
    \draw (4,-0.5) -- (4, -1.5) -- (5, -1.5) -- (4, -0.5);

    \draw (2, -1) -- (1.5, -2) -- (2, -3) -- (3, -3) -- (3, -2) -- (2, -1);
\draw (1.5, -2) -- (1, -2) -- (0, -2) -- (0, -3) -- ( 1, -3) -- (1, -2);
\draw (2, -4) -- (2, -3);
\draw[dotted] (-0.5,-0.5) -- +(2.8, 0) -- +(2.8, -4) -- +(0,-4) -- +(0,0);

\draw[line width = 1.5pt] (2,-1) -- (1.5, -2) -- (2,-3);
\draw[line width = 1.5pt] (2,-3) -- (2,-4);
\draw[line width = 1.5pt](1.5, -2) -- (0,-2) -- (0,-3);
\draw[line width = 1.5pt](1,-2) -- (1,-3);

  \end{tikzpicture}

\caption{A cactus with $8$ blocks, $5$ cycle blocks, and $3$ edge blocks. The vertices inside the dotted rectangle are the vertices of $S_B^{x,e}$ and the bold path corresponds to a possible $R_B^{x,e}$.}
\label{fig:cactus}
\end{figure}

We proceed with the description of the algorithm. At every block $B$, we compute a function $\lambda_B : E(B) \rightarrow \spanning(G_B) \cup \{\bot\}$ of partial solutions.
If $B$ is an edge block consisting of the edge $e$, then $\lambda_B(e)$ is:
\begin{itemize}
\item a spanning tree of $G_B$,
\item of diameter at most $k$
\item that minimizes the eccentricity of $a(B)$,
\end{itemize}
\noindent
if such a tree exists, and $\bot$ otherwise.

If $B$ is a cycle block and $e=\{x,y\}$ an edge of $B$, then $\lambda_B(e)$ is:
\begin{itemize}
\item a spanning tree $T$ of $G_B - e$,
\item of diameter at most $k$
\item that minimizes the eccentricities of $a(B)$ in both $T[S_B^{x,e}]$ and $T[S_B^{y,e}]$
\end{itemize}
\noindent
if such a tree exists, and $\bot$ otherwise.
Note that, as $G_B^{x,e}$ and $G_B^{y,e}$ have only the vertex $a(B)$ in common,
minimizing the eccentricities of $a(B)$ in $T[S_B^{x,e}]$ and
minimizing the eccentricities of $a(B)$ in  $T[S_B^{y,e}]$ are two independent objectives.

If for some block $B$ we have $\lambda_B(e)=\bot$ for every edge $e$ of $B$, then $G_B$ (and therefore $G$ as well) does not contain a spanning tree of diameter at most $k$. In this case the algorithm stops and returns \textsc{No}. Otherwise, the processing continues until finally $B_r$ is processed successfully and the algorithm returns \textsc{Yes}, since there exists $e \in E(B_r)$ such that $\lambda_{B_r}(e) \not = \bot$ which constitutes a spanning tree of $G$ with diameter at most $k$.

Given a cycle block $B$, an edge $e= \{x,y\}$ of $B$, a  subgraph $T$
of $G_B$, and two integers $i$ and $j$, we say that $T$ satisfies the \emph{$(e,i,j)$-condition} if:
\begin{itemize}
\item $T$ is a tree, of diameter at most $k$, that does not contain $e$,
\item the eccentricity of $a(B)$ in $T[S_B^{x,e}]$ is at most $i$, and
\item the eccentricity of $a(B)$ in $T[S_B^{y,e}]$ is at most $j$.
\end{itemize}

Given an edge block $B$, a  subgraph $T$ of $G_B$, and an integer $i$, we say that $T$ satisfies the
\emph{$(i)$-condition} if:
\begin{itemize}
\item $T$ is a tree of diameter at most $k$ and
\item the eccentricity of $a(B)$ in $T$ is at most $i$.
\end{itemize}

Let us fix a block $B$, and an edge $e$ of $E(B)$. In the sequel our goal is to describe how to compute $\lambda_B(e)$.
We can assume that for every child $C$ of $B$, the function $\lambda_C$ has already been computed and $C$ contains at least one edge $e'$ such that $\lambda_C(e') \not = \bot$, since otherwise the algorithm would have stopped.

We define $T^e$ to be the tree obtained by taking the union of all the following:
\begin{itemize}
\item the graph $B - e$  if $B$ is a cycle block,
\item the graph $B$ if $B$ is an edge block, and
\item $\lambda_C(e_C)$ for every child $C$ of $B$ that is an edge block containing only the edge $e_C$.
\end{itemize}

For every child $C$ of $B$ that is a cycle block, for every edge $e'$ of $C$ such that $\lambda_C(e') \not = \bot$ and for $x' \in e'$ , the tree $R_C^{x',e'}$ is
$\lambda_C(e')[S_B^{x',e'}]$.
Note that, given a child $C$ of $B$ that is a cycle block, and three vertices $v,v',v''$ of $V(C)$ such that $v \not = v''$, $v' \not = a(C)$, and $\{v,v'\}$ and $\{v',v''\}$ are in $E(C)$,  if  $R_C^{v,\{v,v'\}}$ and $R_C^{v',\{v',v''\}}$ are defined, then $R_C^{v,\{v,v'\}}$ is a subgraph of $R_C^{v',\{v',v''\}}$.
We define 
$\Rcal_B = \{R_C^{x',e'} \mid C \in \Ccal(B) \mbox{ is a cycle block}, e' \in E(C),\lambda_C(e') \not = \bot,  x' \in e' \}$.

For $\Qcal \subseteq \Rcal_B$ we denote by $T_\Qcal^e$ the graph obtained by taking the union of $T^e$ and  $\Qcal$.
If there exists $R \in \Rcal_B$ such that $\Qcal=\{ R \}$, we write $T_R^e$ instead of $T_\Qcal^e$.
We define $\texttt{close}_{\Rcal_B}(\Qcal)$ to be the set of  elements of $\Rcal_B$ that are subgraphs of $T_\Qcal^e$.
Note that $T_\Qcal^e = T_{\texttt{close}_{\Rcal_B}(\Qcal)}^e$.

If $B$ is a cycle block, we define for each $i,j \in \intv{0,k}$ the set $\Rcal_B^{(e,i,j)} = \{R \in \Rcal_B\mid \mbox{$T_R^e$ satisfies the  $(e,i,j)$-condition} \}$.
If $B$ is an edge block, we define for each $i \in \intv{0,k}$ the set $\Rcal_B^{(i)} = \{R \in \Rcal_B\mid \mbox{$T_R^e$ satisfies the  $(i)$-condition} \}$.

Note that, if $B$ is a cycle block (resp. an edge block), then for each $i,j \in \intv{0,k}$ and for each $R_1, R_2 \in \Rcal_B$ such that $R_2$ is a subtree of $R_1$, then if $R_2 \not \in \Rcal_B^{(e,i,j)}$ (resp. $R_2 \not \in \Rcal_B^{(i)}$) , we have $R_1 \not \in \Rcal_B^{(e,i,j)}$ (resp. $R_1 \not \in \Rcal_B^{(i)}$).

We associate a boolean variable ${\bf v}(R) = {\bf v}_C^{x',e'}$ with each $R=R_C^{x',e'} \in \Rcal_B$.
With a slight abuse of notation, we say that a set $\Qcal \subseteq \Rcal_B$ satisfies a formula $\phi$ over these variables if  $\phi$ is satisfied when each variable of
$\{{\bf v}(R) \mid R \in \Qcal\}$ is set to true and each variable of
$\{{\bf v}(R) \mid R \in \Rcal_B \sm \Qcal\}$ is set to false simultaneously.
In the following we are going to build three formulas $\phi_0$, $\phi_1$, and $\phi_2$, and if $\Qcal \subseteq \Rcal_B$ satisfies $\phi_0 \wedge\phi_1 \wedge \phi_2$ then this implies that $T_\Qcal^e$ is a correct value for $\lambda_B(e)$.

Along with the description, the reader is referred to Figure~\ref{fig:phi} to get some intuition about the formulas $\phi_0$, $\phi_1$, and $\phi_2$. In this figure, we have a cycle block $B$ with two children $C_1$ and $C_2$. As we will see later, when computing $\lambda_B(e)$ in this example, we have:
    \begin{eqnarray*}
      \phi_0 & = & ({\bf v}(R_{C_1}^{v',\{v',v''\}}) \Rightarrow {\bf v}(R_{C_1}^{v,\{v,v'\}})) \wedge ({\bf v}(R_{C_1}^{v'',\{v'',v\}}) \Rightarrow {\bf v}(R_{C_1}^{v',\{v',v''\}})) \wedge\\
 &  & ({\bf v}(R_{C_1}^{v'',\{v'',v'\}}) \Rightarrow {\bf v}(R_{C_1}^{v,\{v,v''\}})) \wedge ({\bf v}(R_{C_1}^{v',\{v',v\}}) \Rightarrow {\bf v}(R_{C_1}^{v'',\{v'',v''\}})) \wedge\\
 &  & ({\bf v}(R_{C_2}^{w',\{w',w''\}}) \Rightarrow {\bf v}(R_{C_2}^{w,\{w,w'\}})) \wedge ({\bf v}(R_{C_2}^{w'',\{w'',w\}}) \Rightarrow {\bf v}(R_{C_2}^{w',\{w',w''\}})) \wedge\\
 &  & ({\bf v}(R_{C_2}^{w'',\{w'',w'\}}) \Rightarrow {\bf v}(R_{C_2}^{w,\{w,w''\}})) \wedge ({\bf v}(R_{C_2}^{w',\{w',w\}}) \Rightarrow {\bf v}(R_{C_2}^{w'',\{w'',w''\}})), \\
    \end{eqnarray*}
    \begin{eqnarray*}
      \phi_1 & = & ({\bf v}(R_{C_1}^{v',\{v',v''\}}) \vee {\bf v}(R_{C_1}^{v',\{v',v\}}))\wedge (\overline{{\bf v}(R_{C_1}^{v',\{v',v''\}})} \vee \overline{{\bf v}(R_{C_1}^{v',\{v',v\}})}) \wedge \\
       & & ({\bf v}(R_{C_1}^{v'',\{v'',v'\}}) \vee {\bf v}(R_{C_1}^{v'',\{v'',v\}}))\wedge (\overline{{\bf v}(R_{C_1}^{v'',\{v'',v'\}})} \vee \overline{{\bf v}(R_{C_1}^{v'',\{v'',v\}})}) \wedge \\
       & & ({\bf w}(R_{C_2}^{w',\{w',w''\}}) \vee {\bf w}(R_{C_2}^{w',\{w',w\}}))\wedge (\overline{{\bf w}(R_{C_2}^{w',\{w',w''\}})} \vee \overline{{\bf w}(R_{C_2}^{w',\{w',w\}})}) \wedge \\
       & & ({\bf w}(R_{C_2}^{w'',\{w'',w'\}}) \vee {\bf w}(R_{C_2}^{w'',\{w'',w\}}))\wedge (\overline{{\bf w}(R_{C_2}^{w'',\{w'',w'\}})} \vee \overline{{\bf w}(R_{C_2}^{w'',\{w'',w\}})}),\mbox{ and}
    \end{eqnarray*}\vspace{.1cm}
    
for every two vertices of $V(C_1) \cup V(C_2)$, say $v'$ and $w''$, the clause $\overline{{\bf v}(R_{C_1}^{v',\{v',v''\}})} \vee \overline{{\bf v}(R_{C_2}^{w'',\{w'',w'\}})}$ is a clause of $\phi_2$ if and only if the path defined by $v',v,w,w''$ has diameter greater than $k$.
In the general case, the clauses deal with $T^e_{\{R_{C_1}^{v',\{v',v''\}},R_{C_2}^{w'',\{w'',w'\}}\}}$ instead of the path $v',v,w,w''$, but the main idea behind the clauses is the same.

\begin{figure}
  \centering
  \begin{tikzpicture}

    \node at (0,-0.5) {$B$};
    \node at (-2.5,-1.5) {$C_1$};
    \node at (2.5,-1.5) {$C_2$};
    \node at (0,+0.3) {$u = a(B)$};
    \node at (1.3,-1) {$w$};
    \node at (-1.3,-1) {$v$};
    \node at (2,-2.3) {$w'$};
    \node at (1,-2.3) {$w''$};
    \node at (-2,-2.3) {$v'$};
    \node at (-1,-2.3) {$v''$};
    \node at (0.7,-0.4) {$e$};
    \def\a{2.5};
    \filldraw (0,0) circle (\a pt);
    \filldraw (-1,-1) circle (\a pt);
    \filldraw (1,-1) circle (\a pt);
    \filldraw (-1,-2) circle (\a pt);
    \filldraw (-2,-2) circle (\a pt);
    \filldraw (1,-2) circle (\a pt);
    \filldraw (2,-2) circle (\a pt);

    \draw (0,0) -- (-2,-2) -- (-1,-2) -- (-1,-1) -- (1,-1) -- (1,-2) -- (2,-2) -- (0,0);

  \end{tikzpicture}

  \caption{Example of a cycle block $B$ with two children $C_1$ and $C_2$.}
  \label{fig:phi}
\end{figure}
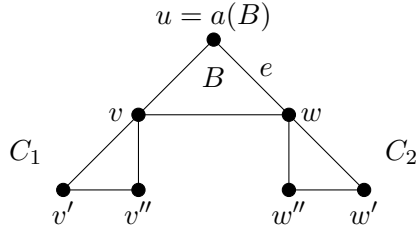


\medskip

We construct a \textsc{2-Sat} formula $\phi_0$ such that for each $R_1, R_2 \in \Rcal_B$ where $R_2$ is a subgraph of $R_1$, $\phi_0$ contains the clause ${\bf v}(R_1) \Rightarrow {\bf v}(R_2)$.
It is easy to show that given $\Qcal \subseteq \Rcal_B$, $\Qcal$ satisfies $\phi_0$ if and only if $\Qcal = \texttt{close}_{\Rcal_B}(\Qcal)$.

We construct a \textsc{2-Sat} formula $\phi_1$ as follows.
For every child $C$ of $B$ that is a cycle block, and every two consecutive edges  $e_1 = \{v_1,v_2\}$ and $e_2 = \{v_2, v_3\}$ of $C$ such that $a(C) \notin \{ v_1, v_2 \}$ and $R_C^{v_2,e_1}, R_C^{v_2,e_2} \in \Rcal_B$, we add to $\phi_1$ two clauses ${\bf v}_C^{v_2,e_1} \vee {\bf v}_C^{v_2,e_2}$ and  $\overline{{\bf v}_C^{v_2,e_1}} \vee \overline{{\bf v}_C^{v_2,e_2}}$.
With this definition of $\phi_1$, we now state the following lemma.

\begin{lemma}
\label{lemma:phi1}
Let $\Qcal$ be a subset of $\Rcal_B$ such that $\Qcal$ satisfies $\phi_0$.
  $\Qcal$ satisfies $\phi_1$ if and only if $T_\Qcal^e$ is a spanning tree of $G_B$.
\end{lemma}
\begin{proof}
Let $\Qcal \subseteq \Rcal_B$.
First assume that $T_\Qcal^e$ is a spanning tree of $G_B$.
Let $C \in \Ccal(B)$ be a cycle block, and let $e_1 = \{v_1,v_2\}$ and $e_2 = \{v_2, v_3\}$ be two consecutive edges of $C$ such that $e_1 \not = e_2$, $v_1 \not = a(C)$, and $v_2 \not = a(C)$.
As $T_\Qcal^e$ is connected, the clause ${\bf v}_C^{v_2,e_1} \vee {\bf v}_C^{v_2,e_2}$ is satisfied.
As $T_\Qcal^e$ does not contain any cycle,  the clause  $\overline{{\bf v}_C^{v_2,e_1}} \vee \overline{{\bf v}_C^{v_2,e_2}}$ is satisfied.

Assume now that $\Qcal$ satisfies $\phi_1$, and let $z$ be a vertex of $G_B$. If $z \in V(B)$, then there is a path from $z$ to $a(B)$ in $T^e$ and hence also in $T_\Qcal^e$.
Otherwise, let $C_z \in \Ccal(B)$ be the block such that $z \in V(G_{C_z})$, and let $s(z)$ be the only vertex of $V({C_z})$ such that $s(z) \not = a(C_z)$ and each path in $G_B$ from $a(B)$ to $z$ contains $s(z)$.
Note that if $z \in V(C_z)$, then $s(z) = z$.
If $C_z$ is an edge block, then $z \in V(T^e)$; therefore, there is a path from $a(B)$ to $z$ in $T_\Qcal^e$.
Otherwise, if $C_z$ is a cycle block, then the condition ${\bf v}_C^{v_2,e_1} \vee {\bf v}_C^{v_2,e_2}$, with $v_2 = s(z)$, ensures that $z \in T_\Qcal^e$ and that there is a path in $T_\Qcal^e$ from $a(B)$ to $z$. Thus, $T_\Qcal^e$ is connected and $V(T_\Qcal^e) = V(G_B)$.
We need to show that $T_\Qcal^e$ does not contain any cycle.
By construction of $T_\Qcal^e$, if it contains a cycle, this cycle should be $C$ where $C \in \Ccal(B)$.
The condition $\overline{{\bf v}_C^{v_2,e_1}} \vee \overline{{\bf v}_C^{v_2,e_2}}$ ensures that $C$ is not a subgraph of $T_\Qcal^e$.
\end{proof}

We build a formula $\phi_2$ over the variables $\{{\bf v}(R)\mid R \in \Rcal_B^{(e,k,k)}\}$.
For each $R_1, R_2 \in  \Rcal_B^{(e,k,k)}$, $R_1 \not = R_2$, if $T^e_{\{R_1,R_2\}}$ has diameter greater than $k$, then we add the clause $\overline{{\bf v}(R_1)} \vee \overline{{\bf v}(R_2)}$ to $\phi_2$.
With this definition of $\phi_2$, we now state the  following lemma.
\begin{lemma}
\label{lemma:phi12}
Let $B$ be a cycle block (resp. an edge block), $i$ and $j$ be two integers of $\intv{0,k}$,
and $\Qcal$ be a subset of $\Rcal_B^{(e,i,j)}$ (resp. of $\Rcal_B^{(i)}$) such that $\Qcal$ satisfies $\phi_0$ and $\phi_1$.
$\Qcal$ satisfies $\phi_2$ and $T^e$ satisfies the $(e,i,j)$-condition (resp. the $(i)$-condition)
if and only if $T_\Qcal^e$ is a spanning tree of $G_B$
that satisfies the $(e,i,j)$-condition (resp. the $(i)$-condition).
\end{lemma}
\begin{proof}
Assume that $B$ is a cycle block.
Let $i,j$ be two integers in $\intv{0,k}$ and let $\Qcal \subseteq \Rcal_B^{(e,i,j)}$.

First assume that $T_\Qcal^e$ is a spanning tree of $G_B$ that satisfies the $(e,i,j)$-condition.
This directly implies that $T^e$ also satisfies the $(e,i,j)$-condition.
It remains to show  that $\Qcal$ satisfies $\phi_2$.
For this, assume that there exist $R_1$ and $R_2$ in $\Qcal$ such that $T_{\{R_1,R_2\}}^e$ has diameter more than $k$. Since $T_{\{R_1,R_2\}}^e$ is a subtree of $T_\Qcal^e$, this implies that $T_\Qcal^e$ also has diameter more than $k$, which is a contradiction because $T_\Qcal^e$ satisfies the $(e,i,j)$-condition.

Assume now that $\Qcal$ satisfies  $\phi_2$ and $T^e$ satisfies the $(e,i,j)$-condition.
As $\Qcal$ satisfies $\phi_1$, we know by Lemma~\ref{lemma:phi1} that $T_\Qcal^e$ is a spanning tree of $G_B$.
As $\Qcal \subseteq \Rcal_B^{(e,i,j)}$ and $T^e$ satisfies the $(e,i,j)$-condition, then
the eccentricity of $a(B)$ in $T_\Qcal^e[S_B^{x,e}]$ is at most $i$, and
the eccentricity of $a(B)$ in $T_\Qcal^e[S_B^{y,e}]$ is at most $j$.
Indeed, let $z \in S_B^{x,e}$.
If $z \in V(B)$ then, as $T^e$ satisfies the $(e,i,j)$-condition, we have that  $\texttt{cost}_{T_\Qcal^e}(a(b),z) \leq i$
If $z \not \in V(B)$ then, as $T_\Qcal^e$ is a spanning tree of $G_B$, we have that there exists $R \in \Qcal$ such that $z \in V(R)$.
By definition of $\Rcal_B^{(e,i,j)}$, we obtain that $\texttt{cost}_{T_\Qcal^e}(a(b),z) \leq i$
The same argument applies if $z \in S_B^{y,e}$.
It remains to show that $T_\Qcal^e$ is of diameter at most $k$.
Let $z$ and $z'$ be two vertices of $T_\Qcal^e$.
If both $z$ and $z'$ are in $V(B)$, then as $T^e$ satisfies the $(e,i,j)$-condition, this implies that  $\texttt{cost}_{T_\Qcal^e}(z,z') \leq k$.
If $z \in V(B)$ and $z' \not \in V(B)$ then, as $T_\Qcal^e$ is a spanning tree of $G_B$, there exists $R' \in \Qcal$ such that $z' \in V(R')$.
As $R' \in \Rcal_B^{(e,i,j)}$, $\texttt{cost}_{T_\Qcal^e}(z,z') \leq k$.
Otherwise, if both $z$ and $z'$ are not in $V(B)$,  since $T_\Qcal^e$ is a spanning tree of $G_B$, there exist $R,R' \in \Qcal$ such that $z \in V(R)$ and $z' \in V(R')$.
If $R = R'$, then as $R' \in \Rcal_B^{(e,i,j)}$,  $\texttt{cost}_{T_\Qcal^e}(z,z') \leq k$.
Otherwise, as $\Qcal$ satisfies $\phi_2$, then $T_{\{R,R'\}}^e$ has diameter at most $k$; therefore, $\texttt{cost}_{T_\Qcal^e}(z,z') \leq k$.

The same arguments apply if $B$ is an edge block.
\end{proof}


\begin{lemma}
\label{lemma:cor}
If $B$ is a cycle block (resp. an edge block) and
if there exists a spanning tree $\hat{T}_B$ of that satisfies the $(e,i,j)$-condition (resp. $(i)$-condition) for some $i,j \in \intv{0,k}$, then
there exists $\Qcal \subseteq \Rcal_B^{(e,i,j)}$ (resp. $\Qcal \subseteq \Rcal_B^{(i)}$) that satisfies $\phi_0$, $\phi_1$, and $\phi_2$.
\end{lemma}
\begin{proof}

For readability, we consider the case where $B$ is an edge block.
Let $x = a(B)$.
Assume that there exists  $\hat{T}_B$, a spanning tree of $G_B$, that satisfies the $(i)$-condition for some $i \in \intv{0,k}$.
We define $\Qcal = \texttt{close}_{\Rcal_B}(\{R_C^{x',e'} \mid C \in \Ccal(B), C$ is a cycle block, $e'\in E(C), e' \not \in E(\hat{T}_B[V(C)]), x' \in e'\})$ and we claim that
$\Qcal$  satisfies $\phi_0$, $\phi_1$, and $\phi_2$.
By definition of $\texttt{close}_{\Rcal_B}$, $\Qcal$ satisfies $\phi_0$.
It is not difficult to see that $T_\Qcal^e$ is a spanning tree and so, by Lemma~\ref{lemma:phi1}, $\Qcal$ satisfies $\phi_1$.
Let $z$ be a vertex of $V(G_B) \sm V(B)$, and let $C \in \Ccal(B)$ such that $z \in V(G_C)$.
The path in $\hat{T}_B$ and the path in $T_\Qcal^e$ from $a(B)$ to $z$ use exactly the same edges of $C$.
This implies that $\texttt{cost}_{T_\Qcal^e}(a(B),z) \leq i$.
Otherwise $\hat{T}_B[V(G_C)]$ would have been a better value for $\lambda_C(e')$ for the only compatible edge $e' \in E(C)$.
With the same arguments we show that  $T_\Qcal^e$ has diameter at most $k$.
This implies that $T_\Qcal^e$ satisfies the $(i)$-condition, and so
$\Qcal \subseteq \Rcal_B^{(i)}$ and $\Qcal$ satisfies $\phi_2$.

The same arguments also work if $B$ is a cycle block but we should take care about the part that is in $S_B^{x,e}$ and the part that is in $S_B^{y,e}$ separately.
\end{proof}


We now have all the elements to compute the value $\lambda_B(e)$.
We assume that $B$ is a cycle block (resp. an edge block).
If there is no  $\Qcal \subseteq \Rcal_B^{(e,k,k)}$ (resp. $\Qcal \subseteq \Rcal_B^{(k)}$) that satisfies $\phi_0$, $\phi_1$, and $\phi_2$,
or  $T^e$ does not satisfy the $(e,k,k)$-condition (resp. $(k)$-condition),
 then we set $\lambda_B(e) = \bot$.
Otherwise, we aim at computing two integers $i_0$ and $j_0$ that are the smallest $i$ and $j$ in $\intv{0,k}$ such that there exists $\Qcal \subseteq \Rcal_B^{(e,i,j)}$ (resp. $\Qcal \subseteq \Rcal_B^{(i)}$) that satisfies $\phi_0$, $\phi_1$, and $\phi_2$ and such that $T^e$ satisfies the $(e,i,j)$-condition (resp. $(i)$-condition).
In order to compute $i_0$ and $j_0$, we first fix $j$ to be $k$ and do a binary search on $i$, between $0$ and $k$, to find the smallest value $i_0$ such that  there exists $\Qcal \subseteq \Rcal_B^{(e,i_0,k)}$ (resp. $\Qcal \subseteq \Rcal_B^{(i_0)}$) that satisfies $\phi_0$, $\phi_1$, and $\phi_2$ and such that $T^e$ satisfies the $(e,i_0,k)$-condition (resp. $(i_0)$-condition).
We fix this value of $i_0$  and  we do a second  binary search, this time on $j$, between $0$ and $k$, to find the smallest value $j_0$ such that  there exists $\Qcal \subseteq \Rcal_B^{(e,i_0,j_0)}$ (resp. $\Qcal \subseteq \Rcal_B^{(i_0)}$) that satisfies $\phi_0$, $\phi_1$, and $\phi_2$ and such that $T^e$ satisfies the $(e,i_0,j_0)$-condition (resp. $(i_0)$-condition).
We fix this value of $j_0$ and we also fix  $\Qcal \subseteq \Rcal_B^{(e,i_0,j_0)}$ (resp. $\Qcal \subseteq \Rcal_B^{(i_0)}$) that satisfies $\phi_0$, $\phi_1$, and $\phi_2$.
We set $\lambda_B(e) = T^e_\Qcal$.
Using Lemma~\ref{lemma:phi1} and Lemma~\ref{lemma:phi12}, we know that the graph $\lambda_B(e)$ is a spanning tree of $G_B$ that satisfies the $(e,i_0,j_0)$-condition (resp. $(i_0)$-condition). Moreover, using Lemma~\ref{lemma:cor}, we know that there is no spanning subtree of $G_B$ that satisfies the $(e,i_1,j_1)$-condition (resp. $(i_1)$-condition) with $i_1 < i_0$ or $j_1 < j_0$ (resp. $i_1 < i_0$).
This finishes the description of the algorithm.

Let us now discuss about the running time of the algorithm.
At each step, given a cycle block (resp. an edge block) $B$  and $e \in E(B)$, for each $i,j \in \intv{0,k}$, we can check if $T^e$ satisfies the $(e,i,j)$-condition (resp. the $(i)$-condition) in time $\Ocal(n^2)$.
Moreover, the number of elements in $\Rcal_B$ is linear in $n$ and for each $i,j \in \intv{0,k}$, $\Rcal_B^{(e,i,j)}$ can be computed in time $\Ocal(n^2)$.
As $\Rcal_B$ contains at most $\Ocal(n)$ elements, then the  \textsc{2-Sat} formulas $\phi_0$, $\phi_1$, and $\phi_2$ contain at most $\Ocal(n^2)$ clauses.
We can check for each of the $\Ocal(n^2)$ possible clauses if it is in $\phi_0$, $\phi_1$, or $\phi_2$ in time $\Ocal(n)$.
Hence, we can compute $\phi_0$, $\phi_1$ and $\phi_2$ in time $\Ocal(n^3)$.
As they contain at most $\Ocal(n^2)$ clauses, we can solve them in time $\Ocal(n^2)$.
Since for each block $B$ and each edge $e \in E(B)$, we perform at most two (independent) binary searches to find $i_0$ and $j_0$, we can compute $\lambda_B(e)$ in time $\Ocal(n^3 \cdot \log k)$.
Because there is a linear number of values $\lambda_B(e)$ to compute, we obtain an algorithm that solves {\sc Diameter-Tree*} in time  $\Ocal(n^4 \cdot \log k)$.
Using again a binary search on $k$ between $0$ and $2^{\lceil \log {\opt} \rceil}$, and the previous algorithm that solves {\sc Diameter-Tree*} as a subroutine, we obtain an algorithm that solves {\sc Diameter-Tree} in time $\Ocal(n^4 \cdot (\log {\opt})^2)$ where $\opt$ is the diameter of the solution.\end{proof}

\section{FPT algorithm parameterized by $k + \tw + \Delta$}
\label{sec:FTP-algo}

In this section we prove that the \pb problem is $\fpt$ on general graphs parameterized by $k$, $\tw$, and $\Delta$. The proof is based on standard, but nontrivial, dynamic programming on graphs of bounded treewidth. It should be mentioned that we can assume that a tree decomposition of the input graph $G$ of width $\Ocal(\tw)$ is given together with the input. Indeed, by using for instance the algorithm of Bodlaender \emph{et al}.~\cite{BodlaenderDDFLP16}, we can compute in time $2^{\Ocal(\tw)} \cdot n$ a tree decomposition of $G$ of width at most $5 \tw$. Note that this running time is clearly dominated by the running time stated in Theorem~\ref{thm:FPT-algo}. Recall also that, by~\cite{CyganNPPRW11,Klo94}, it is possible, given a tree decomposition to transform it in polynomial time
to a new one ${\cal D}$ of the same width and construct a collection ${\cal G}$ such that the triple $({\cal D},r,{\cal G})$ is nice.


\begin{theorem}\label{thm:FPT-algo}
The \pb problem can be solved in time
$(k^{\Delta \cdot \tw} \cdot\Delta\cdot \tw)^{\Ocal( \tw)} \cdot n^{\Ocal(1)}$.
In particular, it is $\fpt$ parameterized by $k$, $\tw$, and $\Delta$.
\end{theorem}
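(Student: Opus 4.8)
The plan is to perform bottom-up dynamic programming over a nice tree decomposition $({\cal D}, r, {\cal G})$ of the input graph $G$, of width $w = \O(\tw)$, maintaining for each bag $X_t$ a table of "partial solutions" that records just enough information about the already-processed subgraph $G_t$ to (a) certify that the partial spanning forest can be completed to a spanning tree of diameter at most $k$, and (b) carry forward the reload-cost distances needed to check the diameter constraint later. The key datum for a node $t$ is, roughly, a triple $(F, R, \alpha)$ of the kind introduced in the preliminaries under "transfer triples": $F$ is the (forest) restriction of the candidate spanning tree to $G_t$, $R = X_t$ is the interface, and $\alpha\colon R \times R^F \to \intv{0,k}\cup\{\bot\}$ records, for every boundary vertex $v \in X_t$ and every vertex/relevant-edge $a$ of $F$, the reload cost of the unique $F$-path from $v$ to $a$ — where "relevant edge" means we must remember, for edges incident to a boundary vertex, which \emph{color} that edge has (equivalently, which incident edge it is), since future reloads at boundary vertices depend on it. Because $G$ has maximum degree $\Delta$, each boundary vertex has at most $\Delta$ incident edges, so $R^F$ restricted to the "interface part" has size $\O(\Delta \cdot \tw)$; combined with the fact that $\alpha$ takes values in $\intv{0,k}\cup\{\bot\}$, the number of distinct $\alpha$-profiles relevant at the boundary is $(k+2)^{\O(\Delta\cdot\tw)}$, and the number of forests on $\leq w+1$ labelled boundary vertices together with the chosen incident colors is $(\Delta\cdot\tw)^{\O(\tw)}$. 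Hence the table size is $(k^{\Delta\cdot\tw}\cdot \Delta\cdot\tw)^{\O(\tw)}$, as claimed; we also need to track which boundary vertices are already "saturated" in the sense of eventual connectivity, but that only contributes a $2^{\O(\tw)}$ factor absorbed into the bound.

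The transitions follow the five node types of a nice triple. \emph{Leaf}: the unique empty table entry. \emph{Vertex-introduce}: add $v_{\rm insert}$ to $R$ as an isolated vertex of $F$ with $\alpha(v_{\rm insert},v_{\rm insert})=0$ and $\alpha(v,v_{\rm insert})=\bot$ for $v\neq v_{\rm insert}$. \emph{Edge-introduce} $e_{\rm insert}=\{u,v\}$: for each table entry, either do not take $e$ (copy the entry, but now $e$ may become a forgotten non-interface edge later) or take $e$, which is allowed only if $u,v$ lie in different trees of $F$ (else a cycle is created); when taking $e$ we merge the two trees and update $\alpha$ by the natural "concatenate through the new edge, adding the reload cost $c(e', e)$ at $u$ and $c(e, e'')$ at $v$" rule — this is exactly a one-edge instance of the fusion operation $\alpha_1 \oplus_\beta \alpha_2$ from the preliminaries. \emph{Forget} $v_{\rm forget}$: we project $v_{\rm forget}$ out of $R$; before doing so we must (i) verify that $v_{\rm forget}$'s incident-edge colors are consistent with never incurring a reload $>k$ on any path through it — but more importantly (ii) fold the distance information: since $v_{\rm forget}$ will never again gain an incident edge, the $F$-path from any surviving boundary vertex $v$ to any future vertex that passes through $v_{\rm forget}$ is already determined on the $v_{\rm forget}$-side, so we contract that information into $\alpha$ restricted to the remaining interface, and we also check the "eccentricity through $v_{\rm forget}$" partial-diameter constraint; an entry is discarded if any recorded pairwise cost would force diameter $>k$. \emph{Join} of children $t', t''$ with $X_t = X_{t'} = X_{t''}$ and $E(G_{t'})\cap E(G_{t''})=\emptyset$: combine an entry $(F_1,R,\alpha_1)$ from $t'$ with $(F_2,R,\alpha_2)$ from $t''$; this is legal iff $F_1\cup F_2$ is acyclic, and then the merged $\alpha$ is computed by $\alpha_1\oplus_\beta\alpha_2$ where $\beta(e^-,e^+) = c(e^-,e^+)$ records the reload cost at the boundary vertex where an $F_1$-edge meets an $F_2$-edge — this is precisely why the transfer-triple fusion was set up in the preliminaries, and one checks that the pairwise interface costs in the fused triple are bounded by $k$ (discard otherwise). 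At the root $r$, where $X_r=\es$ and $G_r=G$, a non-empty table entry (necessarily the empty triple but reached via a valid computation) certifies existence of a spanning tree of $G$ of diameter $\leq k$; we answer YES iff the root table is non-empty, and to solve the optimization version we either binary-search on $k\leq n\cdot\max c$ or carry the best achievable diameter in the table.

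The two correctness claims to prove are: \textbf{soundness} — every entry surviving to the root corresponds to an actual spanning tree of diameter $\leq k$ — which follows because at each forget/join step we discard any entry whose recorded interface costs witness diameter $>k$, and the recorded costs are, by induction, exactly the true reload-cost distances in $G_t$ along the partial tree; and \textbf{completeness} — every spanning tree $T$ of $G$ with $\diam(T)\leq k$ induces, at every node $t$, a table entry that the algorithm keeps — which follows by projecting $T$ onto $G_t$ and observing that the projection is a forest with interface $X_t$, with $\alpha$-values $\leq k$ because $T$ itself has diameter $\leq k$ (so every sub-path does too), and that the projections at $t'$, $t''$ fuse correctly to the projection at a join node $t$. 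The heart of the matter, and the step I expect to be the main obstacle, is getting the \emph{information content of a state exactly right}: it must be simultaneously (i) enough to detect diameter violations across the whole final tree — which forces us to remember not just distances between boundary vertices but enough about pending "one-sided" eccentricities (distances from a boundary vertex to the deepest not-yet-forgotten vertex on each branch) to reconstruct worst-case paths after the join, and this is where the $\Delta$ factor genuinely enters, since a path leaving and re-entering the "processed side" can reload at up to $\Delta$ incident edges of a boundary vertex — and yet (ii) small enough to stay within $(k^{\Delta\cdot\tw}\cdot\Delta\cdot\tw)^{\O(\tw)}$, which means we must argue that remembering, per boundary vertex, only its incident-edge \emph{colors} (at most $\Delta$ of them) and the $\O(\Delta\cdot\tw)$-entry cost vector $\alpha$ is sufficient — i.e. that two partial solutions agreeing on this data are interchangeable for all future extensions. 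Once that exchange argument is nailed down, the running-time bound is a routine count of states times the polynomial cost (dominated by the $\oplus_\beta$ fusion at join nodes) of each transition.
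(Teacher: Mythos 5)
Your proposal takes the same overall approach as the paper's proof: a bottom-up dynamic programming over a nice tree decomposition whose states are transfer triples $(F, X_t, \alpha)$, with the $\oplus_\beta$ fusion used at edge-introduce and join nodes, and with $\alpha$ carrying reload-cost information from boundary vertices to elements of $F$. You have also correctly identified why $\Delta$ enters the bound and what the state-space size should be.

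There is, however, one genuine gap, and it is precisely the point you yourself flag as ``the main obstacle'': what exactly $F$ is. Early in the proposal you describe $F$ as ``the (forest) restriction of the candidate spanning tree to $G_t$'', which cannot be the state, since that restriction $\hat F = \hat T[V_t]$ has unbounded size. Later you count states as if $F$ lives only on the at most $w+1$ boundary vertices plus their incident-edge colors; this is too little, because a future path can leave $X_t$, descend into the already-forgotten interior, branch, and return, so one must also remember certain \emph{external} vertices and edges of the partial forest (the branching structure and contracted inter-branch segments) in order to compute reload costs correctly after a join. The paper resolves this with an explicit operation $\textsf{Reduce}(\hat F, X_t)$ that repeatedly deletes degree-$1$ vertices and dissolves degree-$2$ vertices lying outside $N_{\hat F}[X_t]$, together with the associated reduce function $\varphi$; this keeps $F$ a forest with at most $w_t-2$ external vertices and at most $2w_t-3$ external edges, so that the number of forests is $(\Delta\cdot\tw)^{\O(\tw)}$ and the number of $\alpha$-profiles is $k^{\O(\Delta\cdot\tw^2)}$. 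In particular, the forget transition is more delicate than you sketch: one must re-apply $\textsf{Reduce}$, identify the set $Z$ of elements removed, and update $\alpha$ by taking $\max_{y\in Z}\alpha'(x,y)$ on the image $\varphi(v_{\rm forget})$. Once the $\textsf{Reduce}$ mechanism and $\varphi$ are made explicit, your transitions coincide with the paper's and the running-time analysis goes through.
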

\begin{proof}

Before we start the description of the dynamic programming, we need some definitions.
Let $F$ be a forest and let $S$ be a set of vertices in $F$ that is good for $G$.
We define $\textsf{Reduce}(F,S)$ as the forest $F'$ that is obtained from $F$
by repetitively applying the following operations
to vertices that are not in  $N_F[S]$ as long as this is possible:
\begin{itemize}
\item removing a vertex of degree $1$ and
\item dissolving a vertex of degree $2$.
\end{itemize}


Suppose now that $\textsf{Reduce}(F,S) = F'$. We define the associated {\em reduce function} $\varphi: V(F) \to V(F') \cup E(F')$ as follows.
For every vertex $z \in V(F)$, we define $K_z$ to be the set of vertices $x$ of $V(F')$ such
that there exists a path in $F$ from $z$ to $x$ that does not use any vertex of $V(F') \sm \{x\}$.
If $K_z$ contains only one element $x$, then we define $\varphi(z) = x$, otherwise we define $\varphi(z) = K_z$. To show that $\varphi$ is well-defined, we claim that $1 \leq |K_z| \leq 2$ and if $|K_z|=2$ then $K_z \in E(F')$. Indeed, since each connected component
of $F$ contains an element of $S$, we have that $|K_z| \geq 1$.
Assume that $K_z$ contains two distinct vertices $x_1$ and $x_2$.
By definition,  we know that $x_1$ and $x_2$ are in the same connected component of $F$ and also of $F'$.
Let $P_i$ be the path from $z$ to $x_i$, $i \in \{1,2\}$, in $F$ and let $P$ be the path from $x_1$ to $x_2$ in $F[V(P_1) \cup V(P_2)]$.
By definition of $x_1$ and $x_2$, $V(P) \cap V(F') = \{x_1,x_2\}$.
Moreover, since $F$ is a forest, then $P$ is the unique path from $x_1$ to $x_2$ in $F$.
Let assume that $\{x_1,x_2\}$ is not an edge of $F'$ and let $x_3$ be a vertex of $F'$ on the path from $x_1$ to $x_2$ in $F'$.
Then $x_3$ should be in $P$.
This contradicts the fact that $V(P) \cap V(F') = \{x_1,x_2\}$.
As $F'$ is a forest, this also implies that $|K_z| \leq 2$.

We now proceed with the dynamic programming algorithm that solves {\sc Diameter-Tree*}, the decision version of {\sc Diameter-Tree}.
Let $(G,\chi,c,k)$ be an instance of {\sc Diameter-Tree*}. 
Consider a nice triple $({\cal D},r,{\cal G})$ where ${\cal D}$ is a   tree decomposition $D=(Y,{\cal X}=\{X_{t}\mid t\in V(Y)\})$ of $G$ with width at most $\tw$ and ${\cal G}=\{G_{t}\mid t\in V(Y)\}$. For each  $t\in V(Y)$ we set $w_t = |X_t|$ and $V_{t}=V(G_{t})$.
We also refer to the vertices of $X_{t}$ as {\em $t$-terminals} and to the edges that are incident to vertices in $X_{t}$ as {\em $t$-terminal edges}.
We provide  a table $\Rcal_t$ that the dynamic programming algorithm computes
for each node of ${\cal D}$. For this, we need first the notion of a \emph{$t$-pair}, that is a pair $(F,\alpha)$ where:

\begin{itemize}
\item $F$ is a forest such that
\begin{enumerate}
\item $X_{t}$ is good for $F$,
\item $X_{t} \subseteq V(F)$,
\item $N_{F}(X_{t})\subseteq N_{G}(X_{t})$,
\item $|V(F) \sm N_F[X_{t}]| \leq w_t-2$, and
\item $|\{e \in E(F) \mid e \cap X_{t} = \es\}| \leq 2 w_t-3$,
\end{enumerate}
\item $\alpha:X_{t} \times X_{t}^{F}\to \intv{0,k} \cup \{\bot\}$,
\end{itemize}
We call the vertices in $V(F) \sm N_F[X_{t}]$ {\em external} vertices of $F$
and the edges of $\{e \in E(F) \mid e \cap X_{t} = \es\}$ {\em external} edges of $F$.

We  need  the function  $\beta_{t}: {\textsf{adj}_G(X_t) \choose 2}\rightarrow \intv{0,k} \cup \{\bot\}$ so
that, for each $e_1,e_2 \in \textsf{adj}_G(X_{t})$, if there exists $x \in X_{t}$ such that $e_1 \cap e_2 = \{x\}$,
then $\beta_{t}(e_1,e_2) = c(e_1,e_2)$, otherwise $\beta_{t}(e_1,e_2) = \bot$.

Let $(F,\alpha)$ be a $t$-pair. Recall that $X_{t}^{F}$ contains all $t$-terminals
and all non-$t$-terminal edges of $F$.
Given a $t$-pair $(F,\alpha)$ as above we say that it is {\em admissible} if for every $(a,a')\in X_{t}^{F}\times X_{t}^{F}$
one of the following holds:
\begin{itemize}
\item there is no path between $a$ and $a'$ in $F$ containing a vertex in $X_{t}$,
\item one, say $a$, of $a,a''$ is a vertex in $X_{t}$ and $\alpha(a,a')\leq k$,
\item some internal vertex $b$ of the path $P$ between  $a$ and $a'$ in $F$  belongs in $X_{t}$
and $\alpha_t(b,a)+\beta_t(e^-,e^+)+\alpha_t(b,a')\leq k$, where  $e^+,e^-$ are the two edges
in $P$ that are incident to $b$.
\end{itemize}
Intuitively, the admissibility of a $t$-pair $(F,\alpha)$ assures that the transferring
cost, indicated by $\alpha$, between any two external elements
is bounded by $k$.\smallskip

It is now time to give the precise definition of the table ${\cal R}_{t}$ of our dynamic programming algorithm.
A pair $(F,\alpha)$ belongs in ${\cal R}_t$ if $G$ contains a spanning tree $\hat{T}$ 
where $\diam(T)\leq k$ and  the forest $\hat{F} = \hat{T}[V_t]$ (i.e. the restriction
of $\hat{T}$ to the part of the graph that has been processed so far) satisfies the following properties:

%
\begin{itemize}
\item $\textsf{Reduce}(\hat{F},X_t) = F$, with the reduce function $\varphi$,
\item for each $x \in X_t$ and $y \in X_t^F$,
$\alpha(x,y) = \bot$ if and only if $x$ and $y$ are in two different connected components in $F$
and if $\alpha(x,y)\not = \bot$, then for each $z \in \varphi^{-1}(y)$, $\cost_{\hat{F}}(x,z) \not = \bot$ and $\alpha(x,y)  \geq \cost_{\hat{F}}(x,z)$.
\end{itemize}\smallskip

Notice that each $(F,\alpha)$ as above is a $t$-pair. Indeed,
Conditions 1--3 follow by the fact that $\hat{T}$ is a spanning tree of $G$
and therefore $\hat{F}$ is a spanning forest of $G_{T}$.
Conditions 4 and 5 follow by the fact that the internal vertices (resp. edges) of a tree with
no vertices of degree 2 are at most two less than the number of its  leaves (resp. at most twice the number of its leaves minus three).
Moreover, the values of $\alpha$ are bounded by $k$ because  the
diameter of $\hat{T}$ is at most $k$ and therefore the same holds for all the connected
components of $\hat{F}$. Notice that, for the same reason, all pairs in ${\cal R}_{t}$ must
be admissible.

In the above definition, the external vertices and edges of $F$ correspond to the parts of $\hat{F}$
that have been ``compressed'' during the reduction operation and the function $\alpha$
stores the transfer costs between those parts and the terminals.
In this way, the trees in the $t$-pairs in ${\cal R}_{t}$ ``represent'' the restriction of all possible solutions in $G_{t}$.
Moreover, the values of $\alpha$ indicate how these partial solutions
interact with the $t$-terminals.

 Our next concern is to bound the size of ${\cal R}_{t}$.

%
\begin{claim}\label{claim:DP} For every $t\in V(Y)$, it holds that $|{\cal R}_{t}|\leq k^{\Ocal(\Delta\cdot \tw^2)}\cdot (\Delta\cdot \tw)^{\Ocal( \tw)}$.
\end{claim}
\begin{proof}
As we impose $N[X_{t}] \subseteq V(F)$, we have at most $2^{\Delta \cdot \tw}$ choices for the set $\{e \in E(F) \mid e \cap X_{t} \not = \es\}$ and at most $(\Delta \cdot \tw)^{\Ocal(\tw)}$ choices for the other edges or vertices. So the number of forest we take into consideration in $\Rcal_t$ is at most $2^{\Delta \cdot \tw}\cdot (\Delta \cdot \tw)^{\Ocal(\tw)}$.
As the number of vertices and the number of edges of $F$ is upper bounded by $\Ocal(\Delta \cdot \tw)$, the number of function $\alpha$ is at most $k^{\Ocal(\Delta\cdot \tw^2)}$.
So $|\Rcal_t| \leq k^{\Ocal(\Delta\cdot \tw^2)}\cdot (\Delta\cdot \tw)^{\Ocal( \tw)}$ and the claim holds.
\end{proof}

%
%

Clearly, $(G,\chi,c,k)$ is a \textsc{Yes}-instance if and only if  $\Rcal_{r} \not = \es$. 
We now proceed with the description of how to compute the set $\Rcal_t$ for every node $t \in \Tcal$.
For this, we will assume inductively that, for every
descendent $t'$ of $t$, the set $\Rcal_{t'}$  has already been
computed. We distinguish several cases depending on the type of node $t$:

\begin{itemize}
\item If $t$ is a {\sl leaf node}. Then $G_t = \{\es,\es\}$ and $\Rcal_t = \{((\es,\es),\varnothing)\}$. 
\item If $t$ is an {\sl  vertex-introduce   node}. Let $v$ be the insertion vertex of $X_t$
and let $t'$ be the child of $t$.
Then
\begin{eqnarray*}
R_t  & = & \big\{ \big((V(F') \cup \{v\},E(F')),{\alpha}\big) \mid \exists (F',\alpha') \in R_{t'}:\\
& & ~~~~~~~~~~\alpha = \alpha'\cup \big\{\big((v,v),0\big)\big\}\cup\big\{ \big((v,a),\bot\big)\mid  a \in X^{F'}_{t}\setminus\{v\}
\big\}.
\end{eqnarray*}
Notice that at this point $v$ is just an isolated vertex of $G_{t}$. This vertex is added in $F$ and $\alpha$
is updated with the corresponding ``void'' transfer costs.

\item If $t$ is an {\sl edge-introduce  node}.  Let $e=\{x,y\}$ be the insertion edge of $X_{t}$ and let $t'$ be the child of $t$.
We define $F''=(X_{t},\{e\})$ and we set up $\alpha'':X_{t}\times X_t^{F''}\to\intv{0,k}\cup\{\bot\}$ (notice that $X_{t}^{F''}=X_{t}$) so that $\alpha''(x,y)=\alpha''(y,x)=0$ and
is $\bot$ for all other pairs of $X_t\times X_{t}$.
Then
\begin{eqnarray*}
R_t & = &  R_{t'} \cup \{ (F,\alpha) \mid  \mbox{$(F,\alpha)$ is admissible, $F$ is a forest, and there exists   a pair } \\
 & & ~~~~~~~~~~~~~~~~~~~~~ \mbox{$(F',\alpha') \in R_{t'}$ such that\ } F = F'\cup F'' \mbox{~and~} \alpha = \alpha'\oplus_{\beta_{t}}\alpha''\}.
\end{eqnarray*}
In the above case, the single edge graph $F''$ is defined and the $F$ of each new $t$-pair is its union with $F'$.
Similarly, the function $\alpha''$ encodes the trivial transfer costs in $F''$.
Also, $\alpha$ is updated so to include the fusion of the transfer costs of $\alpha$ and $\alpha''$.

\item If $t$ is an {\sl forget node}. Let $v$ be the forget vertex and let $t'$ be the child of $t$.
Then $R_t$ contains every $t$-pair  $(F,\alpha)$ such that there exists $(F',\alpha') \in \Rcal_{t'}$
where:
\begin{itemize}

\item if $t$ is not the root of $Y$, then the connected component of $F'$ containing $v$ also contains an other element $v' \in X_{t}$ (this is necessary as $X_{t}$ should always be good for $F$),
\item $F = \textsf{Reduce}(F',X_{t}),$ with associated reduce function $\varphi$,
\item
we denote by  $Z$  the set of every edge and every vertex that is in $F'$ but not in $F$.
Moreover, if $\varphi(v)$ is a vertex, then we further set $Z\leftarrow Z\cup\{\varphi(v)\}$.
Notice also that if $z\in Z$, then $\varphi(z)=\varphi(v)$.
Then $\alpha=\alpha'|_{X_{t}\times(X_{t}^{F}\setminus \{\varphi(v)\})}\cup\big\{\big((x,\varphi(v)),\max_{y \in Z}{\alpha'(x,y)}\big)\mid x \in X_{t}\big\}$.
\end{itemize}
Notice that $F$ is further reduced because $v$ has been ``forgotten'' in $X_{t}$.
This may change the status of $v$ as follows: either $v$ is not any more in $F$ or $v$
is still in $F$ but it is  not a $t$-terminal. In the first case $\varphi(v)$ is either a vertex or an edge of $F$
and in the second
$\varphi(v)=v$. In any case  we should update the values of $\alpha(x,\phi(v))$
for every $x\in X_{t}$ to the maximum transition cost (with respect to $\alpha'$) from $x$ to some element of $Z$.



\item If $t$ is an {\sl join node}. Let $t'$ and $t''$ be the children of $t$. We define
\begin{eqnarray*}
R_t & = &  R_{t'} \cup \{ (F,\alpha) \mid  \mbox{$(F,\alpha)$ is admissible, $F$ is a forest, and there exist   two } \\
 & & ~~~~~~~~~~~~~~~~~~~~~~~~~~\mbox{pairs $(F',\alpha') \in R_{t'}$ and $(F'',\alpha'') \in R_{t''}$ such \ }\\
 & & ~~~~~~~~~~~~~~~~~~~~~~~~~~\mbox{that\ }  F = F'\cup F'' \mbox{~and~} \alpha = \alpha'\oplus_{\beta_{t}}\alpha''\}.
\end{eqnarray*}
The above case is very similar to the case of the edge-introduce  node. The only difference is that now $F''$
is now taken from ${\cal R}_{t''}$.

\end{itemize}

Taking into account Claim~\ref{claim:DP} on the bound of the size of ${\cal R}_{t}$, it is easy to verify
that, in each of the above cases, $R_t$ can be computed in $k^{\Ocal(\Delta\cdot \tw^2)}\cdot (\Delta\cdot \tw)^{\Ocal( \tw)}$ steps.
So we can solve our problem in time $k^{\Ocal(\Delta\cdot \tw^2)}\cdot (\Delta\cdot \tw)^{\Ocal( \tw)} \cdot n$, and the theorem follows.

\end{proof}

\section{Polynomially bounded costs}\label{sec:poly-costs}

So far, we have completely characterized the parameterized complexity of the \textsc{Diameter-Tree} problem for any combination of the three parameters $k$, $\tw$, and $\Delta$. In this section we focus on the special case when the maximum cost value is polynomially bounded by $n$. The following corollary is an immediate consequence of Theorem~\ref{thm:FPT-algo}.

\begin{corollary}\label{cor:XP}
 If the maximum cost value is polynomially bounded by $n$, the \pb problem is in $\xp$ parameterized by $\tw$ and $\Delta$.
\end{corollary}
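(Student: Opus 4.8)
The plan is to derive this directly from Theorem~\ref{thm:FPT-algo}, which states that \pb can be solved in time $(k^{\Delta \cdot \tw} \cdot \Delta \cdot \tw)^{\Ocal(\tw)} \cdot n^{\Ocal(1)}$, where here $k$ denotes the desired diameter of the solution spanning tree. First I would observe that the parameter $k$ appearing in that running time need not be treated as a parameter of the input: instead, we can always take $k = \opt$, the minimum reload cost diameter over all spanning trees of $G$, or equivalently run the algorithm for the relevant value. Since all reload costs are nonnegative integers bounded by some polynomial $p(n)$, the diameter of any spanning tree is at most $(n-1) \cdot p(n)$, which is polynomially bounded in $n$. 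Hence $\opt \le (n-1)\cdot p(n) = n^{\Ocal(1)}$.

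Concretely, I would run the algorithm of Theorem~\ref{thm:FPT-algo} on the decision version \textsc{Diameter-Tree*} for each candidate value $k \in \intv{0, (n-1)\cdot p(n)}$ (or, more efficiently, perform a binary search on $k$ in this range, exactly as is done at the end of the proof of Theorem~\ref{thm:cactus}), and return the smallest $k$ for which the answer is \textsc{Yes}. Each invocation runs in time $(k^{\Delta \cdot \tw} \cdot \Delta \cdot \tw)^{\Ocal(\tw)} \cdot n^{\Ocal(1)}$. Substituting the bound $k \le n^{\Ocal(1)}$, the factor $k^{\Delta \cdot \tw \cdot \Ocal(\tw)}$ becomes $n^{\Ocal(\Delta \cdot \tw^2)}$, so each call takes time $n^{\Ocal(\Delta \cdot \tw^2)} \cdot (\Delta \cdot \tw)^{\Ocal(\tw)} \cdot n^{\Ocal(1)}$. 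Since $\Delta \le n$, the term $(\Delta \cdot \tw)^{\Ocal(\tw)}$ is also of the form $n^{\Ocal(\tw)}$, and thus the whole computation runs in time $n^{f(\tw, \Delta)}$ for a computable function $f$ (e.g.\ $f(\tw,\Delta) = \Ocal(\Delta \cdot \tw^2)$), with the outer loop or binary search over $k$ contributing only a polynomial (or logarithmic) multiplicative overhead. This is precisely the definition of an \xp algorithm parameterized by $\tw + \Delta$.

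There is essentially no obstacle here; the only point requiring a moment's care is bookkeeping the dependence of the exponent of $n$ on the parameters. One must check that after substituting the polynomial bound on $k$ (and on $\Delta$, via $\Delta \le n$), every occurrence of a parameter in the exponent of $n$ is a function of $\tw$ and $\Delta$ only, and that no factor of the form $n^{g(n)}$ with $g$ unbounded sneaks in — which it does not, since the only source of a super-polynomial contribution in Theorem~\ref{thm:FPT-algo} was $k^{\Delta \cdot \tw}$, and $k$ is now polynomial in $n$. Hence the statement follows, and one records explicitly that the running time is $n^{\Ocal(\Delta \cdot \tw^2)}$.
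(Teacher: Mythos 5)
Your proof is correct and takes the same approach the paper intends: the paper simply calls this an immediate consequence of Theorem~\ref{thm:FPT-algo}, and you supply exactly the routine bookkeeping (bounding $\opt$ by $n\cdot p(n)$, substituting into the running time, and noting $\Delta\le n$) that makes the "immediate" step explicit. Nothing to add.
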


From Corollary~\ref{cor:XP}, a natural question is whether the \textsc{Diameter-Tree} problem  is  $\fpt$ or ${\sf W}[1]$-hard parameterized by $\tw$ and $\Delta$, in the case where the maximum cost value is polynomially bounded by $n$. The next theorem provides an answer to this question.

\begin{theorem}\label{thm:W-hard-poly-costs}
When the maximum cost value is polynomially bounded by $n$, the \pb problem is ${\sf W}[1]$-hard parameterized by $\tw$ and $\Delta$.
\end{theorem}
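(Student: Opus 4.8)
The plan is to give a parameterized reduction from \textsc{Unary Bin Packing} parameterized by the number of bins, shown to be ${\sf W}[1]$-hard by Jansen \emph{et al}.~\cite{JansenKMS13}. Recall that an instance of \textsc{Unary Bin Packing} consists of a multiset of item sizes $s_1,\ldots,s_n$ given in unary, a bin capacity $B$, and a number of bins $\kappa$, and the question is whether the items can be packed into $\kappa$ bins of capacity $B$; since the encoding is unary, $B$ and each $s_i$ are polynomially bounded by the input size, which is precisely what will make our costs polynomially bounded by $n$. The parameter will be $\kappa$, and in the produced \pb instance both $\tw$ and $\Delta$ will be bounded by a function of $\kappa$ (in fact one can aim for $\Delta=\Ocal(1)$ and $\tw=\Ocal(\kappa)$, or a small constant by splitting high-degree vertices).

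The construction should mirror the \textsc{Partition} reduction of Theorem~\ref{thm:hard-tw-delta}, generalized from two ``sides'' to $\kappa$ ``sides''. First I would build, for each item $a_i$, a \emph{choice gadget} $H_i$ that offers $\kappa$ parallel routes through it, exactly one of which is selected by the spanning tree; the selected route is the one that pays reload cost $s_i$ (the others pay $0$), so choosing route $b$ for item $i$ encodes ``put item $i$ in bin $b$''. These gadgets are chained in series, and the whole chain is attached to a common root $r$; there are $\kappa$ long paths $P_1,\ldots,P_\kappa$ emanating from $r$, one per bin, where $P_b$ accumulates the cost $\sum_{i \text{ in bin } b} s_i$. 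To force every path to have cost \emph{at least} $B$ as well, I would use the same symmetric trick as in Theorem~\ref{thm:hard-tw-delta}: take two symmetric copies of the whole construction joined at their roots by a bridge, so that the diameter of any spanning tree is at least $\max_b d^1_b + \max_b d^2_b$, and since $\sum_b d^j_b = \sum_i s_i = \kappa B$ (assuming, as we may by a standard padding argument, that a yes-instance packs every bin exactly to capacity), the bound $\diam(T)\le \kappa B$ forces $d^j_b = B$ for every $b$ and $j$, i.e.\ a valid packing. All ``illegal'' reloads (those not along the intended path structure) get cost $\kappa B + 1$ so they can never be used in a tree of diameter $\le \kappa B$. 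We then set $k = \kappa B$, which is polynomial in $n$, as required.

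After the construction I would prove the two directions of correctness. For the forward direction, given a valid packing, route each gadget $H_i$ along the path of the bin containing item $i$ and take the obvious spanning tree; a direct computation shows every root-to-leaf path in copy $j$ has cost exactly $B$, and two leaves on different paths give cost $2B \le \kappa B$, while the bridge contributes $0$, so $\diam(T) \le \kappa B$. For the converse, one argues structurally — as in Theorem~\ref{thm:hard-tw-delta} — that any spanning tree of diameter $\le \kappa B$ must contain the bridge, must restrict within each copy of each gadget to one of the $\kappa$ designated routes (any other choice either disconnects the gadget or triggers a cost-$(\kappa B+1)$ reload), and hence encodes an assignment of items to bins; the diameter inequality $\kappa B \ge \max_b d^1_b + \max_b d^2_b \ge B + B$ chained with $\sum_b d^j_b = \kappa B$ pins down $d^j_b = B$ for all $b$, so every bin is filled to exactly $B$ and we recover a feasible packing. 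Finally I would verify the parameter bounds: the gadget chain admits a path decomposition of width $\Ocal(\kappa)$ (each bag containing the $\Ocal(\kappa)$ ``interface'' vertices between consecutive gadgets), and a modest amount of care in routing the $\kappa$ bin-paths keeps $\Delta$ bounded.

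The main obstacle I anticipate is engineering the choice gadget $H_i$ so that it simultaneously (i) offers exactly $\kappa$ mutually exclusive routes that a spanning tree is forced to choose among, (ii) charges the reload cost $s_i$ precisely once and only on the chosen route, while routing the $\kappa-1$ unchosen bin-paths through it at zero cost, and (iii) keeps $\Delta$ small — since naively a junction feeding $\kappa$ bin-paths has degree $\Omega(\kappa)$, I would need to replace such junctions by small binary ``distribution trees'' of depth $\Ocal(\log\kappa)$ with carefully assigned colors so that traversing the distribution tree is free. Getting the cost bookkeeping of these distribution trees to interact correctly with the ``$\le \kappa B$ versus $\ge \kappa B$'' squeeze — in particular making sure the free traversals really contribute $0$ and no spurious reloads of cost $\kappa B+1$ are ever forced — is where the bulk of the technical care will go, exactly as the analogous bookkeeping was the crux of Theorem~\ref{thm:hard-tw-delta}.
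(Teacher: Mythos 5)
Your plan is the same as the paper's: reduce from \textsc{Unary Bin Packing} parameterized by the number of bins $\kappa$, build a chain of item gadgets each offering $\kappa$ mutually exclusive routes (one per bin), hang the chain on a root $r$ from which $\kappa$ bin-paths emanate, and take two symmetric copies joined at $r$ by a zero-cost connection. The gadget you worry about engineering is simply a triangle per item--bin pair: for item $i$ and bin $j$ place $v_i,\ell^i_j,r^i_j$ with edges $\{v_i,\ell^i_j\},\{v_i,r^i_j\},\{\ell^i_j,r^i_j\}$; the spanning tree attaches $v_i$ through exactly one such triangle (paying reload $a_i$ at $v_i$) and for every other bin takes the free chord $\{\ell^i_j,r^i_j\}$. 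This satisfies your (i)--(iii) directly, and the binary distribution trees are unnecessary: $\Delta=2\kappa$ is perfectly fine because $\Delta$ is itself part of the parameter; spending effort to push $\Delta$ to a constant buys nothing here.

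The genuine gap is the budget $k=\kappa B$, which is too large for the converse to hold. After padding so that $\sum_b d^j_b = \kappa B$, pigeonhole gives $\max_b d^j_b \geq B$ and hence $\max_b d^1_b + \max_b d^2_b \geq 2B$; but for $\kappa\geq 3$ this sits strictly below $\kappa B$, so $\diam(T)\leq\kappa B$ does not ``pin down'' $d^j_b=B$. Concretely with $\kappa=3$, take $d^1_1=2B$, $d^1_2=B$, $d^1_3=0$ and $d^2_b=B$ for all $b$: then $\diam(T)=3B=\kappa B$, yet copy~1 overfills bin~1. The correct budget, and the one the paper uses, is $2B$. The right squeeze is then: any two deepest leaves in different copies are joined through the zero-cost bridge, so $\diam(T)\geq\max_b d^1_b+\max_b d^2_b$; if this is $\leq 2B$ then $\min\{\max_b d^1_b,\max_b d^2_b\}\leq B$, and that copy already exhibits a valid $\kappa$-bin packing — no exact-packing padding is required. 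Replacing $\kappa B$ with $2B$ and using this observation repairs your argument.
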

\begin{proof} We present a parameterized reduction from the \textsc{Bin Packing} problem parameterized by the number of bins.  In \textsc{Bin Packing}, we are given $n$ integer item sizes $a_1, \ldots, a_n$ and an integer capacity $B$, and the objective is to partition the items into a minimum number of bins with capacity $B$. Jansen \emph{et al}.~\cite{JansenKMS13} proved that \textsc{Bin Packing} is ${\sf W}[1]$-hard parameterized by the number of bins in the solution, even when all item sizes are bounded by a polynomial of the input size. Equivalently, this version of the problem corresponds to the case where the item sizes are given in {\sl unary} encoding; this is why it is called \textsc{Unary Bin Packing} in~\cite{JansenKMS13}.

Given an instance $(\{ a_1, a_2, \ldots, a_n\}, B, k)$ of \textsc{Unary Bin Packing}, where $k$ is the number of bins in the solution and where we can assume that $k \geq 2$, we create an instance $(G, \chi, c)$ of \pb as follows. The graph $G$  contains a vertex $r$ and, for $i \in [n]$ and $j \in [k]$, we add to $G$  vertices $v_i, \ell_j^i, r_j^i$ and edges $\{r, \ell_j^1\}$, $\{v_i, \ell_j^i\}$, $\{v_i, r_j^i\}$, and $\{\ell_j^i, r_j^i\}$. Finally, for $i \in [n-1]$ and $ j \in [k]$, we add the edge $\{r_j^i, \ell_j^{i+1}\}$. Let $G'$ be the graph constructed so far; see Figure~\ref{fig:bin-packing} for an illustration.

\begin{figure}[htb]
\begin{center}
\hspace*{-.0cm}\includegraphics[width=.98\textwidth]{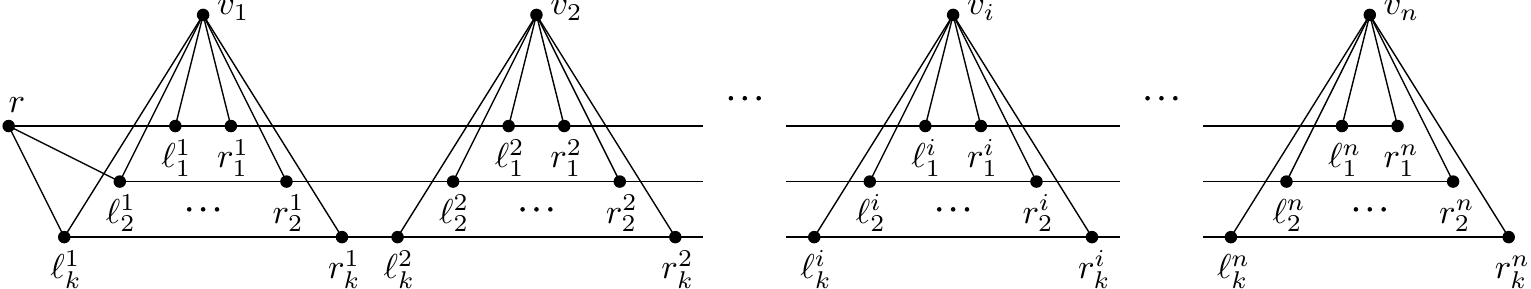}
\end{center}
\caption{Graph $G'$ built in the reduction of Theorem~\ref{thm:W-hard-poly-costs}. Reload costs are not depicted.}
\label{fig:bin-packing}\vspace{-.1cm}
\end{figure}

Similarly to the proof of Theorem~\ref{thm:hard-tw-delta}, we define $G$ to be the graph obtained by taking two disjoint copies of $G'$ and identifying vertex $r$ of both copies. Note that $G$ can be clearly built in polynomial time, and that $\tw(G) \leq k+1$ and $\Delta(G) = 2k$ (since we assume $k \geq 2)$. Therefore, $\tw(G) + \Delta(G) $ is indeed bounded by a function of $k$, as required. (Again, the claimed bound on the treewidth can be easily seen by building a {\sl path} decomposition of $G$ with consecutive bags of the form $\{v_i, \ell_1^i, \ell_2^i, \ldots, \ell_k^i, r_1^i\}, \{v_i, \ell_1^i, \ell_2^i, \ldots, \ell_{k-1}^{i}, r_1^i, r_2^i\}, \{v_i, \ell_1^i, \ell_2^i, \ldots, \ell_{k-2}^{i}, r_1^i, r_2^i, r_3^i\}, \ldots$.)

Let us now define the coloring $\chi$ and the cost function $c$. Once more, for simplicity, we associate a distinct color with each edge of $G$, and thus it is enough to describe the cost function $c$ for every pair of incident edges of $G$. The cost function is symmetric for both copies of $G'$. so we just focus on one copy. For $i \in [n]$, let $e_1, e_2$ be two distinct edges containing vertex $v_i$. We set $c(e_1,e_2) = 2B+1$ unless $e_1 = \{v_i, \ell^i_j\}$ and $e_2 = \{v_i, r^i_j\}$ for some $j \in [k]$, in which case we set $c(e_1,e_2) = a_i$. The cost associated with any other pair of edges of $G$ is set to $0$. Note that, as $(\{ a_1, a_2, \ldots, a_n\}, B, k)$ is an instance of \textsc{Unary Bin Packing}, the reload costs of the instance $(G, \chi, c)$ of \pb are polynomially bounded by $|V(G)|$.

We claim that $(\{ a_1, a_2, \ldots, a_n\}, B, k)$ is a \textsc{Yes}-instance of \textsc{Unary Bin Packing} if and only if $G$ has a spanning tree with diameter at most $2B$.

Assume first that $(\{ a_1, a_2, \ldots, a_n\}, B, k)$ is a \textsc{Yes}-instance of \textsc{Unary Bin Packing}, and let $S_1, \ldots, S_k$ be the $k$ subsets of $\{1, \ldots, n\}$ defining the $k$ bins in the solution. We define a spanning tree $T$ of $G$ with $\diam(T) \leq 2B$ as follows. For each of the two copies of $G'$, tree $T$ contains, for $i \in [n-1]$ and $j \in [k]$, edges $\{r, \ell^1_j\}$ and $\{r_j^i, \ell_j^{i+1}\}$. For $i \in [n-1]$, if the item $a_i$ belongs to the set $S_j$, we add to $T$ the two edges $\{v_i, \ell^i_j\}$ and $\{v_i, r^i_j\}$; otherwise we add to $T$ the edge $\{\ell^i_j, r^i_j\}$. Since the total item size of each bin in the solution of \textsc{Unary Bin Packing} is at most $B$, it can be easily checked that $T$ is a spanning tree of $G$ with  $\diam(T) \leq 2B$.

Conversely, let $T$ be a spanning tree of $G$ with  $\diam(T) \leq 2B$, and we proceed to define a solution $S_1, \ldots, S_k$ of \textsc{Unary Bin Packing}. Let $T_1$ and $T_2$ be the restriction of $T$ to the two copies of $G'$. By the choice of the reload costs and since $\diam(T) \leq 2B$, for every $i \in [n]$ and every $x \in \{1,2\}$, tree $T_x$ contains the two edges $\{v_i, \ell^i_j\}$ and $\{v_i, r^i_j\}$ for some $j \in [k]$, and none of the other edges incident with vertex $v_i$. Therefore, for every $x \in \{1,2\}$, tree $T_x$ consists of $k$ paths sharing vertex $r$. This implies that  $\diam(T) \geq  \frac{1}{2}\diam(T_1) + \frac{1}{2}\diam(T_2)$, and since $\diam(T) \leq 2B$, it follows that there exists $x \in \{1,2\}$ such that $\diam(T_x) \leq B$. Assume without loss of generality that $x=1$, i.e., that $\diam(T_1) \leq B$. We define the bins $S_1, \ldots, S_k$ as follows. For every $i \in [n]$, if $T_1$ contains the two edges $\{v_i, \ell^i_j\}$ and $\{v_i, r^i_j\}$, we add item $a_i$ to the bin $S_j$. Let us verify that this defines a solution of  \textsc{Unary Bin Packing}. Indeed, assume for contradiction that for some $j \in [k]$, the total item size in bin $S_j$ exceeds $B$. As bin $S_j$ corresponds to one of the $k$ paths in tree $T_1$, the diameter of this path would also exceed $B$, contradicting the fact that $\diam(T_1) \leq B$. The theorem follows. \end{proof}



\section{Concluding remarks}\label{sec:conclusion}

We provided an accurate picture of the (parameterized) complexity of the \pb problem for any combination of the parameters $k$, $\tw$, and $\Delta$, distinguishing whether the reload costs are polynomial or not. Some questions still remain open.  First of all, in the hardness result of Theorem~\ref{thm:hard-tw-delta}, we already mentioned that the bound $\Delta \leq 3$ is tight, but the bound  $\tw \leq 3$ might be improved to $\tw \leq 2$. A relevant question is whether the problem admits polynomial kernels parameterized by $k +\tw + \Delta$ (recall that it is \fpt by Theorem~\ref{thm:FPT-algo}). Theorem~\ref{thm:W-hard-poly-costs} motivates the following question: when all reload costs are bounded by a {\sl constant}, is the \pb problem $\fpt$ parameterized by $\tw+\Delta$? It also makes sense to consider the \emph{color-degree} as a parameter (cf.~\cite{FPT-by-tw-Delta}). Finally, we may consider other relevant width parameters, such as pathwidth (note that the hardness results of Theorems~\ref{thm:hard-bounded-tw},~\ref{thm:hard-tw-delta}, and~\ref{thm:W-hard-poly-costs} also hold for pathwidth), cliquewidth, treedepth, or tree-cutwidth.



\bibliographystyle{abbrv}
\bibliography{reload-FPT,Didem}

\begin{thebibliography}{10}

\bibitem{AgDe16}
S.~Agarwal and S.~De.
\newblock Dynamic spectrum access for energy-constrained cr: single channel
  versus switched multichannel.
\newblock {\em IET Communications}, 10(7):761--769, 2016.

\bibitem{AGM11}
E.~Amaldi, G.~Galbiati, and F.~Maffioli.
\newblock On minimum reload cost paths, tours, and flows.
\newblock {\em Networks}, 57(3):254--260, 2011.

\bibitem{AAK+13}
S.~Arkoulis, E.~Anifantis, V.~Karyotis, S.~Papavassiliou, and N.~Mitrou.
\newblock On the optimal, fair and channel-aware cognitive radio network
  reconfiguration.
\newblock {\em Computer Networks}, 57(8):1739--1757, 2013.

\bibitem{BaAl13}
S.~Bayhan and F.~Alagoz.
\newblock Scheduling in centralized cognitive radio networks for energy
  efficiency.
\newblock {\em IEEE Transactions on Vehicular Technology}, 62(2):582--595,
  2013.

\bibitem{BEAT13}
S.~Bayhan, S.~Eryigit, F.~Alagoz, and T.~Tugcu.
\newblock Low complexity uplink schedulers for energy-efficient cognitive radio
  networks.
\newblock {\em IEEE Wireless Communications Letters}, 2(3):363--366, 2013.

\bibitem{BCRR12}
A.~P. Bianzino, C.~Chaudet, D.~Rossi, and J.-L. Rougier.
\newblock A survey of green networking research.
\newblock {\em IEEE Communications Surveys \& Tutorials}, 14(1):3--20, 2012.

\bibitem{BodlaenderDDFLP16}
H.~L. Bodlaender, P.~G. Drange, M.~S. Dregi, F.~V. Fomin, D.~Lokshtanov, and
  M.~Pilipczuk.
\newblock A $c^kn$ 5-approximation algorithm for treewidth.
\newblock {\em {SIAM} Journal on Computing}, 45(2):317--378, 2016.

\bibitem{CeKa16}
A.~Celik and A.~E. Kamal.
\newblock Green cooperative spectrum sensing and scheduling in heterogeneous
  cognitive radio networks.
\newblock {\em IEEE Transactions on Cognitive Communications and Networking},
  2(3):238--248, 2016.

\bibitem{CyganFKLMPPS15}
M.~Cygan, F.~V. Fomin, L.~Kowalik, D.~Lokshtanov, D.~Marx, M.~Pilipczuk,
  M.~Pilipczuk, and S.~Saurabh.
\newblock {\em Parameterized Algorithms}.
\newblock Springer, 2015.

\bibitem{CyganNPPRW11}
M.~Cygan, J.~Nederlof, M.~Pilipczuk, M.~Pilipczuk, J.~M.~M. van Rooij, and
  J.~O. Wojtaszczyk.
\newblock Solving connectivity problems parameterized by treewidth in single
  exponential time.
\newblock In {\em Proc. of the 52nd Annual Symposium on Foundations of Computer
  Science (FOCS)}, pages 150--159. {IEEE} Computer Society, 2011.

\bibitem{DAD09}
C.~Desset, N.~Ahmed, and A.~Dejonghe.
\newblock Energy savings for wireless terminals through smart vertical
  handover.
\newblock In {\em Proc. of IEEE International Conference on Communications},
  pages 1--5, 2009.

\bibitem{Die05}
R.~Diestel.
\newblock {\em {Graph Theory}}, volume 173.
\newblock Springer-Verlag, 4th edition, 2010.

\bibitem{DF13}
R.~G. Downey and M.~R. Fellows.
\newblock {\em Fundamentals of Parameterized Complexity}.
\newblock Texts in Computer Science. Springer, 2013.

\bibitem{EBT13}
S.~Eryigit, S.~Bayhan, and T.~Tugcu.
\newblock Channel switching cost aware and energy-efficient cooperative sensing
  scheduling for cognitive radio networks.
\newblock In {\em Proc. of IEEE International Conference on Communications
  (ICC)}, pages 2633--2638, 2013.

\bibitem{Ga08}
G.~Galbiati.
\newblock The complexity of a minimum reload cost diameter problem.
\newblock {\em Discrete Applied Mathematics}, 156(18):3494--3497, 2008.

\bibitem{GGM11}
G.~Galbiati, S.~Gualandi, and F.~Maffioli.
\newblock On minimum changeover cost arborescences.
\newblock In {\em Proc. of the 10th International Symposium on Experimental
  Algorithms (SEA)}, volume 6630 of {\em LNCS}, pages 112--123, 2011.

\bibitem{GGM14}
G.~Galbiati, S.~Gualandi, and F.~Maffioli.
\newblock On minimum reload cost cycle cover.
\newblock {\em Discrete Applied Mathematics}, 164:112--120, 2014.

\bibitem{GGR12}
I.~Gamvros, L.~Gouveia, and S.~Raghavan.
\newblock Reload cost trees and network design.
\newblock {\em Networks}, 59(4):365--379, 2012.

\bibitem{GJ79}
M.~Garey and D.~Johnson.
\newblock {\em {Computers and Intractability: A Guide to the Theory of
  {NP}-completeness}}.
\newblock Freeman, San Francisco, 1979.

\bibitem{GLMM10}
L.~Gourv\`{e}s, A.~Lyra, C.~Martinhon, and J.~Monnot.
\newblock The minimum reload s-t path, trail and walk problems.
\newblock {\em Discrete Applied Mathematics}, 158(13):1404--1417, 2010.

\bibitem{GBA13}
D.~G{\"o}z{\"u}pek, S.~Buhari, and F.~Alag{\"o}z.
\newblock A spectrum switching delay-aware scheduling algorithm for centralized
  cognitive radio networks.
\newblock {\em IEEE Transactions on Mobile Computing}, 12(7):1270--1280, 2013.

\bibitem{GOP+16}
D.~G\"{o}z\"{u}pek, S.~\"{O}zkan, C.~Paul, I.~Sau, and M.~Shalom.
\newblock {Parameterized complexity of the MINCCA problem on graphs of bounded
  decomposability}.
\newblock In {\em Proc. of the 42nd International Workshop on Graph-Theoretic
  Concepts in Computer Science (WG)}, volume 9941 of {\em LNCS}, pages
  195--206, 2016.
\newblock Full version available at \texttt{arXiv:1509.04880}.

\bibitem{FPT-by-tw-Delta}
D.~G{\"{o}}z{\"{u}}pek, H.~Shachnai, M.~Shalom, and S.~Zaks.
\newblock Constructing minimum changeover cost arborescenses in bounded
  treewidth graphs.
\newblock {\em Theoretical Computer Science}, 621:22--36, 2016.

\bibitem{GoSh16}
D.~G{\"o}z{\"u}pek and M.~Shalom.
\newblock Edge coloring with minimum reload/changeover costs.
\newblock {\em Preprint available at \emph{\texttt{arXiv:1607.06751}}}, 2016.

\bibitem{GozupekSVZ14}
D.~G{\"{o}}z{\"{u}}pek, M.~Shalom, A.~Voloshin, and S.~Zaks.
\newblock On the complexity of constructing minimum changeover cost
  arborescences.
\newblock {\em Theoretical Computer Science}, 540:40--52, 2014.

\bibitem{JansenKMS13}
K.~Jansen, S.~Kratsch, D.~Marx, and I.~Schlotter.
\newblock Bin packing with fixed number of bins revisited.
\newblock {\em Journal of Computer and System Sciences}, 79(1):39--49, 2013.

\bibitem{Jen01}
F.~V. Jensen.
\newblock {\em Bayesian Networks and Decision Graphs}.
\newblock Springer, 2001.

\bibitem{Klo94}
T.~Kloks.
\newblock {\em Treewidth. Computations and Approximations}.
\newblock Springer-Verlag LNCS, 1994.

\bibitem{KoCh01}
V.~R. Konda and T.~Y. Chow.
\newblock Algorithm for traffic grooming in optical networks to minimize the
  number of transceivers.
\newblock In {\em Proc. of IEEE Workshop on High Performance Switching and
  Routing}, pages 218--221, 2001.

\bibitem{SpLa88}
S.~J. Lauritzen and D.~J. Spiegelhalter.
\newblock Local computations with probabilities on graphical structures and
  their application to expert systems.
\newblock {\em The Journal of the Royal Statistical Society. Series B
  (Methodological)}, 50:157--224, 1988.

\bibitem{ShRa16}
N.~Shami and M.~Rasti.
\newblock A joint multi-channel assignment and power control scheme for energy
  efficiency in cognitive radio networks.
\newblock In {\em Proc. of IEEE Wireless Communications and Networking
  Conference (WCNC)}, pages 1--6, 2016.

\bibitem{Tov84}
C.~A. Tovey.
\newblock {A simplified NP-complete satisfiability problem}.
\newblock {\em Discrete Applied Mathematics}, 8:85--89, 1984.

\bibitem{WiSt01}
H.-C. Wirth and J.~Steffan.
\newblock Reload cost problems: minimum diameter spanning tree.
\newblock {\em Discrete Applied Mathematics}, 113(1):73--85, 2001.

\end{thebibliography}

\end{document}